\documentclass[11pt]{article}
\usepackage{amsmath,amssymb,amsthm,booktabs,array}
\usepackage[margin=1in]{geometry}
\usepackage[unicode,colorlinks=true,linkcolor=blue,urlcolor=blue,citecolor=blue]{hyperref}
\newcommand{\cc}{\operatorname{cc}}
\newcommand{\OPT}{\operatorname{OPT}}
\newcommand{\Wone}{\mathrm{W}[1]}

\newcommand{\col}{\operatorname{col}}
\newcommand{\dcol}{\delta_{\mathrm{col}}}

\newtheorem{theorem}{Theorem}[section]
\newtheorem{lemma}[theorem]{Lemma}
\newtheorem{corollary}[theorem]{Corollary}
\theoremstyle{definition}

\newtheorem{definition}[theorem]{Definition}
\title{Minimum Label $k$-Cut on Planar Graphs:
Complexity and Color Structure}
\author{\small Yuanhao Wang$\,^{\rm a}$\quad\quad Wei Wang$\,^{\rm a}$\footnote{Corresponding author. Email address: wang\_weiw@163.com}\\
\small $^{\rm a}\,$School of Mathematics and Statistics, Xi'an Jiaotong University, Xi'an, 710049, P. R. China}
\date{}
\begin{document}
\maketitle
\begin{abstract}
Minimum Label $k$-Cut asks for a minimum-size set of colors such that deleting all edges of these colors leaves at least $k$ connected components. We study how the structure of the underlying graph and its color classes affects the complexity of this problem. Regarding fixed-parameter tractability with respect to $k$ on simple planar graphs, we prove that the problem is $\Wone$-hard, even when the underlying graph is a cactus graph and each color appears at most twice. Furthermore, for cactus graphs in which every color class is connected, we establish a complexity boundary in terms of the maximum monochromatic degree: when this degree is at most two, an optimal solution can be found in $O(Nk^2)$ time, where $N$ is the input size, with no restriction on color frequency or the number of blocks spanned by a color; allowing the maximum monochromatic degree to increase to three already makes the decision version NP-complete. Finally, we prove that the problem remains $\Wone$-hard parameterized by $k$ on fan graphs, even when every color class is an induced path. These results show that bounding color frequency does not suffice to eliminate parameterized hardness. They also establish a monochromatic-degree boundary for connected colors on cactus graphs and identify complexity barriers to extending this tractable case to outerplanar graphs.
\end{abstract}
\noindent\textbf{Keywords.}
Minimum label cut; planar graphs; parameterized complexity; cactus graphs; connected colors.

\section{Introduction}\label{sec:introduction}

In various applications, edges sharing the same label must be handled as a group. In communication networks, edges subject to a shared failure risk can be represented by the same color~\cite{CPRV16}. Wang and Desmedt~\cite{WD11} used edge-colored graphs to model homogeneous faults: links employing the same hardware or software technology may fail simultaneously. Hou et al.~\cite{HWCZS15} used this model to design fault-tolerant heterogeneous networks and discussed its application to router interconnection design in networks-on-chip. Related directed label $s$-$t$ cut models also arise in security analysis using attack graphs~\cite{JSW02} and in sensor selection for supervisory control~\cite{RKK06}.

We study Minimum Label $k$-Cut on simple undirected edge-colored graphs. Each edge has exactly one color, and edges of the same color need not form a connected subgraph. The objective is to select as few colors as possible so that deleting all edges of these colors leaves at least $k$ connected components. All vertices are retained when edges are deleted.

\subsection{Related Work}
Previous work has mainly considered the global objective $k=2$ or label cuts with specified terminals. For $k=2$, Zhang~\cite{Z14} gave polynomial-time algorithms for planar graphs and graphs of bounded treewidth; Coudert et al.~\cite{CPRV16} studied the cases of connected colors and a bounded number of disconnected colors. Koehler et al.~\cite{Koehler} proposed mathematical programming formulations and exact solution algorithms. For general graphs, Jaffke et al.~\cite{JLMPS26} established a conditional quasi-polynomial lower bound and hardness with respect to the number of colors that are not deleted.

For a variable target number of components $k$, Fomin et al.~\cite{Hedge25} gave a randomized algorithm with running time $\binom{O(k\log n)+\ell}{\ell}n^{O(k)}p^{O(1)}$, where $\ell$ is the budget on the number of deleted colors, $n$ is the number of vertices, and $p$ is the number of colors. For every fixed $k$, this algorithm is FPT parameterized by $\ell$. However, the exponent of the input size still depends on $k$, so even an $O(k)$ upper bound on the color budget does not yield an FPT algorithm parameterized by $k$. The $k$-cut problem on general hypergraphs is $\Wone$-hard parameterized by $k+\ell$~\cite{CC22}, but this result does not directly apply to simple planar graphs. The algorithms for bounded color span in~\cite{CXY21,FPZ23} also have input-size exponents that depend on $k$, where the span of a color is the number of connected components of the subgraph formed by its edges. Fomin et al.~\cite{Hedge25} further asked whether Minimum Label $k$-Cut on simple planar graphs is fixed-parameter tractable parameterized by the target number of components $k$.

\subsection{Main Results}
Addressing this question, we prove that the problem is $\Wone$-hard parameterized by $k$, even when the underlying graph is a cactus graph and each color appears at most twice. For cactus graphs in which every color class is connected, if the maximum monochromatic degree is at most two, an optimal solution can be found in $O(Nk^2)$ time, where $N$ is the input size. The algorithm places no restriction on color frequency and allows a color to span arbitrarily many blocks. If the maximum monochromatic degree is allowed to increase to three, the decision version is already NP-complete, even when every color class is a single edge or a three-edge star.

If the underlying graph class is enlarged to outerplanar graphs, the problem remains $\Wone$-hard parameterized by $k$ even when every color class is a path; this hardness already holds on fan graphs. Both parameterized hardness results also hold for the color budget $\ell$ and the combined parameter $k+\ell$.

\section{Preliminaries}\label{sec:preliminaries}
Let $G=(V,E)$ be a simple undirected graph, and let $\col:E\to C$ be an edge-coloring map. We assume that every color in $C$ appears on some edge. For each color $c\in C$, let $E_c=\col^{-1}(c)$ be its color class and $f(c)=|E_c|$ its frequency. For a set of colors $L\subseteq C$, write $E(L)=\bigcup_{c\in L}E_c$. Deleting the color set $L$ means deleting all edges in $E(L)$ while retaining all vertices.

Let $\cc(J)$ denote the number of connected components of a graph $J$. Given a target number of components $k$, Minimum Label $k$-Cut asks to minimize $|L|$ subject to $\cc(G-E(L))\ge k$; its optimum value is denoted by $\OPT$. The decision version additionally specifies a color budget $\ell$ and asks whether there exists a color set satisfying this condition and $|L|\le\ell$.

Let $N=|V|+|E|+|C|$ be the input size and $c_0=\cc(G)$ the initial number of connected components. Let $h=k-c_0$ denote the number of additional components required, and let $\Psi(L)=\cc(G-E(L))-c_0$ denote the increase in the number of components caused by deleting the color set $L$. Thus, the graph has at least \(k\) connected components after deletion if and only if \(\Psi(L)\ge h\). If $k\le c_0$, no edge deletion is needed and the optimum value is zero; if $k>|V|$, no feasible solution exists. Henceforth, we consider only $c_0<k\le |V|$, that is, $0<h\le |V|-c_0$.

For each color $c$, let $G[c]$ be the subgraph consisting of its color class $E_c$ and the vertices incident with these edges. We call color $c$ connected if $G[c]$ is connected. The maximum monochromatic degree of the graph is defined as
\[
 \dcol=\max_{c\in C}\Delta(G[c]),
\]
that is, the maximum number of edges of the same color incident with a vertex; if all vertices of the graph are isolated, we define \(\delta_{\mathrm{col}}=0\).

\begin{definition}[Cactus Graph]
If every block containing an edge in a simple undirected graph $G$ is either a single-edge graph $K_2$ or a simple cycle, then $G$ is called a cactus graph. We allow cactus graphs to be disconnected and to contain isolated vertices.
\end{definition}

For a cactus graph $G$, let $\mathcal Q$ be its set of cycle blocks. For each $Q\in\mathcal Q$, let $C(Q)$ denote the set of colors appearing on the cycle block $Q\in\mathcal Q$. A parameterized problem is fixed-parameter tractable (FPT) if it can be solved in $g(\kappa)N^{O(1)}$ time, where $\kappa$ is the parameter, $g$ is a computable function, and the exponent of the input size is independent of $\kappa$. We consider three parameterizations: the target number of components $k$, the color budget $\ell$, and their sum $k+\ell$. Both of our parameterized hardness proofs are reductions from \textsc{Clique} parameterized by the clique size $t$. This problem is given a simple graph $H$ and an integer $t$ and asks whether $H$ contains a $t$-clique. In both reductions, the output parameters $k$ and $\ell$ are bounded by functions of $t$.

Using the notation above, the following two lemmas give, respectively, a formula for the increase in the number of components after edge deletion in a cactus graph and an upper bound on the optimal number of colors.

\begin{lemma}\label{H:lem:component}
If $G$ is a cactus graph, then for every edge set $F\subseteq E$,
\begin{equation}\label{H:eq:gain}
 \cc(G-F)-c_0=|F|-|\{Q\in\mathcal Q:F\cap E(Q)\ne\varnothing\}|.
\end{equation}
\end{lemma}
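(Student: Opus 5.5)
We argue block by block, using the fact that the number of connected components of a graph is controlled by its cycle structure. For any graph $J$ with vertex set $V$, we have $\cc(J)=|V|-|E(J)|+\beta(J)$, where $\beta(J)$ is the cycle rank (the dimension of the cycle space). Since deleting $F$ keeps all vertices,
\[
\cc(G-F)-c_0=|F|-\bigl(\beta(G)-\beta(G-F)\bigr).
\]
So it suffices to show that $\beta(G)-\beta(G-F)$ equals the number of cycle blocks $Q\in\mathcal Q$ with $F\cap E(Q)\neq\varnothing$.

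The key step computes $\beta$ for a cactus. The cycle rank of a graph is the sum of the cycle ranks of its blocks: every cycle lies inside a single block, and the cycle space decomposes as a direct sum over blocks. In a cactus, a $K_2$ block has cycle rank $0$ and a cycle block has cycle rank $1$, so $\beta(G)=|\mathcal Q|$. Next, $G-F$ is again a cactus. Its cycles are cycles of $G$, and every cycle of $G$ is the edge set of some cycle block $Q$. Hence a cycle survives in $G-F$ exactly when $F\cap E(Q)=\varnothing$, and the cycle blocks of $G-F$ are precisely the untouched $Q\in\mathcal Q$. Any touched cycle block breaks into a union of paths, which contributes no cycles. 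Therefore $\beta(G-F)=|\{Q\in\mathcal Q: F\cap E(Q)=\varnothing\}|$, and subtracting gives the claimed identity \eqref{H:eq:gain}.

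The only point needing care is the claim that in a cactus every cycle is the full edge set of a cycle block, so that distinct cycle blocks contribute independent cycles and no other cycles exist. This follows because a cycle is 2-connected and hence lies in one block, and a block that is a simple cycle contains only itself as a cycle.

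An alternative, more elementary route is to delete the edges of $F$ one at a time. Removing an edge increases the component count by one exactly when it is a bridge of the current graph. Within a cycle block $Q$ hit by $F$, the first deleted edge of $Q$ is a non-bridge, and every later one is a bridge. Edges outside cycle blocks are always bridges. Summing over the deletions gives the same identity.
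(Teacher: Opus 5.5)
Your proof is correct and takes essentially the paper's approach: both rewrite $\cc(G-F)-c_0$ via the cycle rank $\beta$, then show $\beta(G)=|\mathcal Q|$ and that $\beta(G-F)$ counts the cycle blocks untouched by $F$. The only small difference is that you get $\beta(G)=|\mathcal Q|$ from additivity of cycle rank over blocks, whereas the paper deletes one edge per cycle block to reach a forest with the same components; your alternative bridge-counting argument (the first deleted edge of each hit cycle block is a non-bridge, every other deleted edge is a bridge when removed) is also valid.
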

\begin{proof}
For any undirected graph $J$, its cycle rank is
\[
 \beta(J)=|E(J)|-|V(J)|+\cc(J).
\]
Every simple cycle of a cactus graph is a cycle block, and distinct cycle blocks have pairwise disjoint edge sets. Deleting one edge from each cycle block produces a forest with the same number of connected components as $G$.
The edge-count formula for forests gives $\beta(G)=|\mathcal Q|$. Let $q=|\{Q\in\mathcal Q:F\cap E(Q)\ne\varnothing\}|$, the number of cycle blocks from which at least one edge is deleted. Edge deletion creates no new cycles, and an original cycle block $Q$ remains a cycle after deletion if and only if $F\cap E(Q)=\varnothing$. Therefore, $G-F$ is still a cactus graph and has exactly $|\mathcal Q|-q$ cycle blocks. Hence $\beta(G-F)=|\mathcal Q|-q=\beta(G)-q$.

On the other hand, since all vertices are retained when edges are deleted, the cycle-rank formula gives
\[
\begin{aligned}
 &\beta(G-F)-\beta(G)\\
 =&\bigl(|E|-|F|-|V|+\cc(G-F)\bigr)
   -\bigl(|E|-|V|+c_0\bigr)\\
 =&-|F|+\cc(G-F)-c_0.
\end{aligned}
\]
Combining this with $\beta(G-F)=\beta(G)-q$, we obtain
\[
 \cc(G-F)-c_0=|F|-|\{Q\in\mathcal Q:F\cap E(Q)\ne\varnothing\}|.
\]
\end{proof}

A graph $G$ is $d$-degenerate if every nonempty subgraph of $G$
contains a vertex of degree at most $d$.

\begin{lemma}\label{D:lem:budget}
Let $G$ be a $d$-degenerate graph. If $c_0<k\le |V|$, then $\OPT\le dh$, where $c_0=\cc(G)$ and $h=k-c_0$.
\end{lemma}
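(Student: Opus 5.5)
The plan is to isolate vertices one at a time, using degeneracy to keep the cost of each isolation small. Since $G$ is $d$-degenerate, its vertices admit an ordering $v_1,\dots,v_n$ in which each $v_i$ has at most $d$ neighbours among $v_{i+1},\dots,v_n$. We obtain this ordering by repeatedly removing a vertex of degree at most $d$ from the remaining nonempty subgraph. We then pick the first $h$ vertices $v_1,\dots,v_h$ that are not isolated in $G$; they exist because $h\le |V|-c_0$ and there are at least $|V|-c_0$ non-isolated vertices, as argued below. We let $L$ be the set of colors of all edges joining some $v_i$ ($i\le h$) to a later vertex in the ordering.

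First, the cost bound. Each chosen vertex contributes at most $d$ forward edges, hence at most $d$ colors, so $|L|\le dh$. Moreover, every edge incident to a chosen vertex $v_i$ is a forward edge either of $v_i$ or of an earlier vertex $v_j$. We arrange for that earlier $v_j$ to be one of the chosen vertices, which is where the choice of vertices needs care (see the main obstacle below). Given this, all edges incident to $v_1,\dots,v_h$ lie in $E(L)$, so these $h$ vertices become isolated in $G-E(L)$.

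Next, the count. Let $I$ be the set of isolated vertices of $G$. Each of the $c_0-|I|$ nontrivial components of $G$ retains at least one vertex outside our chosen set, or else we account for it separately. The goal is to show $\cc(G-E(L))\ge c_0+h=k$.

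The main obstacle is this counting step. Isolating $h$ vertices of one component does not automatically raise the count by $h$, since the component might be entirely consumed. The cleaner choice I will try is to work component by component. In a nontrivial component $K$ with $n_K$ vertices, we isolate any $n_K-1$ of its vertices, namely all but the last vertex of $K$ in the degenerate ordering restricted to $K$; these form an initial segment of that ordering. Doing so turns $K$ into $n_K$ components at cost at most $d(n_K-1)$ colors. Isolating the first $j\le n_K-1$ vertices of $K$ in this ordering increases the component count by at least $j$: all edges incident to them are deleted, because every such edge is a forward edge of some earlier selected vertex, and the remaining part is nonempty. Since $\sum_K (n_K-1)=|V|-c_0\ge h$, we distribute $h$ selections greedily across components, taking initial segments. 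This yields a gain of at least $h$ using at most $dh$ colors, which proves $\OPT\le dh$.
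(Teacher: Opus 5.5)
Your final, component-by-component argument is correct. It rests on the same core idea as the paper: isolate $h$ vertices, each at a cost of at most $d$ deleted edges, without ever exhausting a component. Where you differ is in how the vertices are chosen.

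The paper works adaptively. As long as $\cc(G')<k\le|V|$, some component of the current graph $G'$ has at least two vertices. Since $G'$ is still $d$-degenerate, that component contains a vertex of current degree between $1$ and $d$, and deleting its edges raises the count by at least one.

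You instead fix one degeneracy ordering up front. In each nontrivial component $K$ you isolate an initial segment $u_1,\dots,u_{j_K}$ of the restricted ordering, with $j_K\le n_K-1$ and $\sum_K j_K=h$; this is possible because $h\le |V|-c_0=\sum_K(n_K-1)$. Your forward-edge observation makes the cost accounting transparent: an edge at a selected $u_i$ is a forward edge of its earlier endpoint, which is $u_i$ itself or an earlier, hence also selected, vertex. So these vertices become isolated at cost at most $d\,j_K$, and the nonempty remainder of $K$ supplies the extra piece. Your version yields an explicit, non-iterative description of the deleted edge set, at the price of the explicit allocation across components. The paper's greedy gets that allocation for free from the invariant on $\cc(G')$.

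In a final write-up, drop your first attempt, which takes the first $h$ non-isolated vertices of the global ordering. As you noticed, it can swallow a whole component. For two disjoint $K_2$'s ordered $a,b,c,d$ with $h=2$, it isolates $a$ and $b$ and gains only $1$. Also drop the hand-wave ``or else we account for it separately,'' since the component-wise allocation is what actually resolves that issue.
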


\begin{proof}
Starting from $G$, delete edges iteratively until the number of connected components is at least $k$. Let $G'$ be the current graph. Since edge deletion preserves $d$-degeneracy, $G'$ is still $d$-degenerate.

If $\cc(G')<k$, then $k\le |V|$ implies that $G'$ has a connected component $H$ containing at least two vertices. By $d$-degeneracy, $H$ contains a vertex $v$ of degree at most $d$. Since $H$ is connected and has at least two vertices, $1\le d_H(v)\le d$. Deleting all edges incident with $v$ isolates $v$, and $H-v$ is nonempty, so the number of connected components increases by at least one.

Each step deletes at most $d$ edges, and at most $h$ steps suffice to reach the target. Let $F$ be the set of edges deleted in this process. Then $|F|\le dh$ and $\cc(G-F)\ge k$. Let $L=\{\col(e):e\in F\}$ be the set of colors used by these edges. Since $|L|\le |F|$ and $F\subseteq E(L)$, we have $\cc(G-E(L))\ge \cc(G-F)\ge k$. Thus, $L$ is a feasible solution containing at most $dh$ colors, and hence $\OPT\le dh$.
\end{proof}

Cactus graphs are $2$-degenerate, so $\OPT\le 2h$ whenever $c_0<k\le |V|$. Therefore, the dynamic program only needs to consider color budgets $0\le b\le \min\{|C|,2h\}$.

\section{Parameterized Hardness on Simple Planar Graphs}\label{sec:hardness}
The following construction is related to the selection principle in the Minimum $p$-Union problem~\cite{CDM17}. This problem asks for exactly $p$ sets from a given family whose union has minimum cardinality. Regard each edge of a simple graph $H$ as the set consisting of its two endpoints, and set $p=\binom t2$. Then there exist $p$ sets whose union has cardinality at most $t$ if and only if $H$ contains a $t$-clique. We realize this selection principle below by assigning colors to different cycle blocks.

\subsection{Reduction Construction}
Consider a \textsc{Clique} instance $(H,t)$, where $H$ is a simple undirected graph with vertices $v_1,\ldots,v_n$, and assume that $t\ge4$. Let $M=\binom t2$. If $n<t$ or $|E(H)|<M$, then $H$ cannot contain a $t$-clique. In this case, output a fixed no-instance: let $G$ be a single-edge graph, assign one color to its unique edge, and set $k=2$ and $\ell=0$. In what follows, assume that $n\ge t$ and $|E(H)|\ge M$, and construct a Minimum Label $k$-Cut instance $(G,\col,k,\ell)$.

For each source vertex $v$, create a cycle of length $q_v=\max\{3,d_H(v)\}$, denoted by $Q_v$, with all cycles pairwise vertex-disjoint. For each source edge $e=uv$, introduce a color $a_e$ and assign it to one as-yet uncolored edge in each of $Q_u$ and $Q_v$. Thus, each source-edge color appears exactly twice. Assign a new private color to each remaining cycle edge. Order the cycles arbitrarily and join consecutive cycles by bridges, each assigned a new private color. For each internal cycle, choose the two endpoints of a cycle edge as the attachment vertices of the bridges to the preceding and succeeding cycles, respectively. Set
\[
 \ell=M,\qquad k=2M-t+1=(t-1)^2.
\]
The private colors on the padding edges and bridges are collectively called auxiliary colors; each auxiliary color appears only once.

\subsection{Structural Properties and Correctness of the Reduction}
\begin{lemma}\label{H:lem:construction}
The graph $G$ obtained by the above construction is a connected simple cactus graph with maximum degree at most three and pathwidth at most two. Its edge-coloring is proper, and each color appears at most twice. The construction has size $O(n+|E(H)|)$ and can be carried out in polynomial time.
\end{lemma}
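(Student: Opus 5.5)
The plan is to verify each listed property directly from the construction, going through the claims in order.

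\emph{Well-definedness and simplicity.} Since $q_v=\max\{3,d_H(v)\}\ge d_H(v)$, the cycle $Q_v$ has enough edges to receive one edge of color $a_e$ for each of the $d_H(v)$ edges $e$ incident with $v$. Every cycle has length at least $3$, and the cycles are vertex-disjoint. Each bridge joins two distinct cycles, and consecutive cycles are joined by exactly one bridge. Hence $G$ has no loops and no parallel edges, so it is simple.

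\emph{Connectivity and cactus structure.} The cycles are arranged in a linear order and consecutive ones are joined, so $G$ is connected. Every bridge lies on no cycle, because contracting each $Q_v$ to a point turns the bridges into a path. Therefore the blocks of $G$ are exactly the cycles $Q_v$ (each $2$-connected) and the bridges (each a $K_2$), which is the definition of a cactus graph.

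\emph{Maximum degree.} This is the one step needing care. A cycle vertex has degree $2$ plus the number of bridges attached to it. The first and last cycles receive one bridge each. For an internal cycle, the two attachment vertices are the two distinct endpoints of one cycle edge, so each attachment vertex gets exactly one bridge. Thus every vertex has degree at most three.

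\emph{Pathwidth.} I would build a path decomposition cycle by cycle. For a cycle $x_1\cdots x_q$ whose incoming bridge is at $x_1$ and outgoing bridge at $x_q$, with $x_1x_q$ a cycle edge, use the bags $\{x_1,x_i,x_{i+1}\}$ for $i=1,\dots,q-1$. These bags form a path decomposition of the cycle of width $2$ in which $x_1$ is in the first bag and $x_q$ is in the last. Consecutive cycles are then linked by a bag $\{x_q,y_1\}$ covering the bridge, where $y_1$ is the attachment vertex of the next cycle. For the end cycles, I would relabel so that the single attachment vertex is at the appropriate end. Each vertex occurs in a contiguous run of bags, so the pathwidth is at most $2$.

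\emph{Proper coloring and frequency.} Each auxiliary color appears once. Each color $a_e$ appears once in $Q_u$ and once in $Q_v$ with $u\ne v$, so it appears exactly twice, and its two edges lie in vertex-disjoint cycles. Hence no two edges of the same color share a vertex, and the coloring is proper.

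\emph{Size and running time.} The construction has
\[
|V(G)|=\sum_v q_v\le \sum_v \bigl(3+d_H(v)\bigr)=3n+2|E(H)|.
\]
The edges are the cycle edges plus $n-1$ bridges, and there are at most $|E(G)|$ colors. So the size is $O(n+|E(H)|)$, and the construction is clearly carried out in linear, hence polynomial, time. The trivial no-instance case has constant size.
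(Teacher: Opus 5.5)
Your proposal is correct and takes essentially the same route as the paper. Both verify the cactus, degree, coloring and size properties directly, and both use the same fan-type path decomposition: the incoming attachment vertex lies in every bag of its cycle, its cycle-neighbor (the outgoing attachment) lies in the last bag, and a two-vertex bridge bag follows; your explicit check that $q_v\ge d_H(v)$, which makes the color assignment well defined, is a small point the paper leaves implicit.
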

\begin{proof}
The cycles are pairwise vertex-disjoint and are joined only by a chain of bridges, so the resulting graph is a connected simple cactus graph. Each cycle vertex is incident with two cycle edges and at most one bridge, so the maximum degree is at most three. All colors within each cycle are distinct, and all bridge colors are new. The two occurrences of any source-edge color lie on different cycles, so the resulting edge-coloring is proper.

To give a path decomposition of width two, write the vertices of a cycle as $w_0,w_1,\ldots,w_{q-1}$, with the incoming bridge attached at $w_0$ and the outgoing bridge attached at $w_1$. Use the bags $\{w_0,w_{q-1},w_{q-2}\},\ldots,\{w_0,w_2,w_1\}$ in this order, followed by a bag of size two covering the bridge to the next cycle. Use the same bag sequence for the first and last cycles. The first cycle is joined to the next cycle only by the bridge at $w_1$, and the last cycle is joined to the preceding cycle only by the bridge at $w_0$. Every cycle edge is covered by a bag for its cycle, and every bridge is covered by its corresponding bridge bag. For each cycle, $w_0$ appears continuously from the incoming bridge bag to the last bag for the cycle, $w_1$ appears in the last bag for the cycle and the outgoing bridge bag, and every other vertex appears only in consecutive bags for that cycle. Hence, every vertex satisfies the consecutive-occurrence property. Each bag contains at most three vertices, so the pathwidth is at most two.

Since $q_v\le d_H(v)+3$, the total length of all cycles is at most $2|E(H)|+3n$, with another $n-1$ bridges between the cycles. The number of colors does not exceed the number of edges, so the output size is $O(n+|E(H)|)$. All the above steps can be carried out in polynomial time.
\end{proof}

\begin{lemma}\label{H:lem:equivalence}
The graph $H$ contains a $t$-clique if and only if the instance constructed above has a color set $L$ satisfying $|L|\le\ell$ and $\cc(G-E(L))\ge k$.
\end{lemma}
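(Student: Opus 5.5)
Both directions rest on Lemma~\ref{H:lem:component}. The graph $G$ is a connected cactus, so $c_0=1$ and $h=k-1=2M-t$. For a color set $L$, put $F=E(L)$. Then $\Psi(L)=|F|-q(L)$, where $q(L)$ is the number of cycle blocks $Q_v$ hit by $F$; the cycle blocks are exactly the $Q_v$. Write $L=L_s\cup L_a$, with $L_s$ the source-edge colors and $L_a$ the auxiliary colors. Each color in $L_s$ contributes two edges and each color in $L_a$ contributes one, so $|F|=2|L_s|+|L_a|$.

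\emph{Forward direction.} Suppose $K$ is a $t$-clique in $H$. Take $L=\{a_e:e\in E(H[K])\}$, so $|L|=M=\ell$ and $|F|=2M$. The deleted edges lie only on the cycles $Q_v$ with $v\in K$, so $q(L)=t$. Hence $\Psi(L)=2M-t=h$, which gives $\cc(G-E(L))\ge k$.

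\emph{Backward direction.} Let $|L|\le M$ with $\Psi(L)\ge 2M-t$. Let $S\subseteq E(H)$ be the source edges whose colors lie in $L_s$, and let $U=V(S)$ be the set of their endpoints in $H$. The cycles hit by $F$ include all $Q_v$ with $v\in U$, so $q(L)\ge |U|$. Moreover, each auxiliary color either is a bridge, contributing $1$ to $|F|$ and nothing to $q$, or is a padding edge of some $Q_v$. In the padding case it contributes $1$ to $|F|$, and also $1$ to $q$ whenever $v\notin U$ and no other deleted edge lies on $Q_v$.

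I would bound crudely: $\Psi(L)\le 2|S|+|L_a|-|U|$, using only $q\ge|U|$. Set $s=|S|$ and $a=|L_a|$, with $s+a\le M$. Then $2s+a-|U|\ge 2M-t$ gives $|U|\le 2s+a-2M+t\le s-M+t\le t$.

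This is where the main obstacle lies: I need $s=M$, and if $a>0$ the inequality only yields $|U|\le t-(M-s)$. I would handle it with the edge count. A graph on $|U|$ vertices has at most $\binom{|U|}{2}$ edges, so $s\le\binom{|U|}{2}$, and combined with the bound above this gives $s\le\binom{t-(M-s)}{2}$. Writing $j=M-s\ge 0$, we need $\binom t2-j\le\binom{t-j}{2}$, and for $j\ge1$ and $t\ge4$ the right side is strictly smaller. Indeed $\binom t2-\binom{t-j}2=j(2t-j-1)/2>j$ whenever $2t-j-1>2$. Since $j\le M$ is not bounded by $t$, I must check the range $j\ge t$ separately. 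There $|U|\le t-j\le0$ forces $s=0$, while $s=M-j\ge0$ forces $j=M$, so $L=L_a$ and $\Psi(L)\le a\le M<2M-t$ because $M>t$ for $t\ge4$. That is a contradiction. Hence $j=0$, $s=M$, and $|U|\le t$ with $M=\binom t2$ edges in $S$. This forces $|U|=t$ and $S$ to be all edges on $U$, so $U$ is a $t$-clique.

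The steps in order are: compute $c_0$ and $h$; prove the forward direction; derive the inequality $|U|\le t-(M-s)$ via $q\ge|U|$; run the counting argument to rule out auxiliary colors; conclude that $U$ is a clique. The inequality $q\ge|U|$ is the only structural fact about the construction that is needed.
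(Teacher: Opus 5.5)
Your proof is correct and follows essentially the same route as the paper. You count deleted edges as $2s+a$, subtract hit cycle blocks via Lemma~\ref{H:lem:component}, derive $|U|\le t-j$ for the deficit $j=M-s$, and rule out $j\ge 1$ using $\binom t2-j>\binom{t-j}{2}$. The differences are cosmetic: you lower-bound the number of hit cycles by $|U|$ rather than working with that number directly, and your separate case $j\ge t$ is not needed, since $0\le|U|\le t-j$ already forces $j\le t$ (the paper's ``$d\le t$'').
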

\begin{proof}

For the forward direction, given a $t$-clique, select the colors corresponding to its $M$ edges. Exactly $2M$ cycle edges are deleted, and exactly $t$ cycle blocks lose at least one edge. Since $G$ is connected, equation~\eqref{H:eq:gain} implies that the remaining graph has $1+2M-t=k$ connected components. The number of selected colors is $M=\ell$, so this is a feasible solution.

For the reverse direction, suppose that a feasible color set $L$ contains $a$ source-edge colors and $b$ auxiliary colors. Let $v$ be the number of cycle blocks from which at least one edge is deleted. Each source-edge color appears exactly twice, and each auxiliary color appears exactly once, so the number of deleted edges is $2a+b$. By equation~\eqref{H:eq:gain} and the budget constraint,
\[
 2a+b-v\ge2M-t,\qquad a+b\le M.
\]
Let the deficit of the number $a$ of selected source-edge colors from $M$ be $d=M-a\ge0$. The above inequalities give
\[
 b\le d,\qquad v+2d-b\le t,\qquad v+d\le t.
\]
Each selected source-edge color causes both of its corresponding cycle blocks to lose an edge. Thus, the $a$ selected source edges involve at most $v$ source vertices. Since $H$ is simple, $a\le\binom v2$. Suppose that $d\neq0$. Since $d\le t$,
\[
 M-d=a\le\binom v2\le\binom{t-d}{2}.
\]
However, for $1\le d\le t$ and $t\ge4$,
\[
 \binom t2-d-\binom{t-d}{2}
   =\frac{d(2t-d-3)}2>0.
\]
This is a contradiction. Therefore, $d=0$, and hence $a=M$, $b=0$, and $v\le t$. Moreover, $\binom t2=M=a\le\binom v2$ implies that $v=t$, and all $M$ possible edges among these vertices have been selected. Thus, they form a $t$-clique. The above argument also shows that no feasible color set contains an auxiliary color.
\end{proof}

\subsection{Complexity Consequences}
\begin{theorem}\label{H:thm:twice}
Minimum Label $k$-Cut, parameterized by $k$, $\ell$, or $k+\ell$, is $\Wone$-hard even on connected simple cactus graphs with maximum degree at most three and pathwidth at most two, under a proper edge-coloring in which each color appears at most twice.
\end{theorem}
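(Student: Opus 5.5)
The plan is to obtain the theorem by combining Lemma~\ref{H:lem:construction} and Lemma~\ref{H:lem:equivalence} into a parameterized reduction from \textsc{Clique} parameterized by $t$, which is $\Wone$-hard. A parameterized reduction has to satisfy three requirements: the map must be computable in FPT (here polynomial) time, it must preserve yes- and no-instances, and the new parameter must be bounded by a computable function of the old one. I would check these one at a time and then read off the structural restrictions.

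First, I deal with small $t$. The construction assumes $t\ge4$. For $t\le3$, \textsc{Clique} is solvable in polynomial time by brute force in $O(n^3)$. In that case the reduction decides the instance directly and outputs a fixed yes-instance or no-instance. For the no-instance I reuse the single-edge graph with $k=2$, $\ell=0$ already used in the construction. For the yes-instance I take the same single edge with $k=2$, $\ell=1$. Both outputs are connected simple cacti with maximum degree at most three and pathwidth at most two, and their colorings are trivially proper with frequency one. The degenerate no-instance case $n<t$ or $|E(H)|<M$ is handled the same way.

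Second, I treat the main case $t\ge4$, $n\ge t$, $|E(H)|\ge M$. Lemma~\ref{H:lem:construction} gives all the structural properties: $G$ is a connected simple cactus of maximum degree at most three and pathwidth at most two, its coloring is proper, and every color appears at most twice. The same lemma gives polynomial size and construction time. Lemma~\ref{H:lem:equivalence} gives that $H$ has a $t$-clique if and only if the decision instance $(G,\col,k,\ell)$ is a yes-instance. For the parameters, $k=(t-1)^2$, $\ell=\binom t2$ and $k+\ell$ are all computable functions of $t$ alone, and this also holds for the constant outputs in the small cases. So the map is a valid parameterized reduction for each of the three parameterizations $k$, $\ell$ and $k+\ell$.

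Hardness transfers to each parameter simultaneously, so there is no real obstacle here. The only point that needs care is making the trivial instances satisfy every structural restriction claimed in the theorem. That way the hardness holds within the stated restricted class rather than merely for general inputs.
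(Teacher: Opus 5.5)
Your proposal is correct and follows the paper's proof. You combine Lemmas~\ref{H:lem:construction} and~\ref{H:lem:equivalence} into a polynomial-time parameterized reduction from \textsc{Clique}, with $k=(t-1)^2$ and $\ell=\binom t2$ bounded by functions of $t$. The only difference is cosmetic: you dispose of $t\le3$ by explicitly outputting fixed single-edge yes/no instances, which do meet every structural restriction, whereas the paper simply notes that restricting \textsc{Clique} to $t\ge4$ preserves its $\Wone$-hardness.
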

\begin{proof}
\textsc{Clique} parameterized by the clique size $t$ is $\Wone$-complete. Since the cases $t\le3$ can be solved directly in polynomial time, restricting the source problem to $t\ge4$ does not affect its parameterized hardness.

The above preprocessing and construction can be performed in polynomial time. By Lemmas~\ref{H:lem:construction} and~\ref{H:lem:equivalence}, the output instance satisfies all the structural restrictions in the theorem and is equivalent to the source instance. Every output instance satisfies $k\le(t-1)^2$ and $\ell\le\binom t2$. Thus, the above reduction is an FPT reduction for each of the parameters $k$, $\ell$, and $k+\ell$, proving the result.
\end{proof}

\begin{corollary}
Unless $\mathrm{FPT}=\Wone$, the problem on simple planar graphs admits no exact algorithm running in $f(k)N^{O(1)}$ time. When colors are allowed to be disconnected, the hardness persists even if both treewidth and pathwidth are at most two, the graph has maximum degree at most three, the color frequency is at most two, and the maximum monochromatic degree is at most one.
\end{corollary}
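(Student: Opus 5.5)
The corollary follows directly from Theorem~\ref{H:thm:twice} together with the structural facts recorded in Lemma~\ref{H:lem:construction}. We argue by contraposition for the first sentence. Suppose there were an algorithm solving Minimum Label $k$-Cut on simple planar graphs in $f(k)N^{O(1)}$ time. Every cactus graph is outerplanar, hence planar, so this algorithm would in particular solve the instances produced by the reduction of Section~\ref{sec:hardness}. Since that reduction is an FPT reduction from \textsc{Clique} with $k\le (t-1)^2$, composing the two would decide \textsc{Clique} in $f((t-1)^2)\,\mathrm{poly}(n+|E(H)|)$ time. That would place a $\Wone$-complete problem in FPT, giving $\mathrm{FPT}=\Wone$.

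One detail needs care here. The algorithm returns an optimal solution rather than answering the decision question. To recover the decision answer, we compute $\OPT$ and compare it with $\ell$, which costs nothing extra. The fixed no-instance output in the degenerate preprocessing case is also a simple planar graph, so it is covered as well.

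For the second sentence, we list the parameters of the output instance and check each one.
\begin{itemize}
\item \emph{Pathwidth and treewidth.} Lemma~\ref{H:lem:construction} gives pathwidth at most two, and treewidth never exceeds pathwidth, so treewidth is at most two.
\item \emph{Maximum degree.} It is at most three, again by Lemma~\ref{H:lem:construction}.
\item \emph{Color frequency.} Each color appears at most twice.
\item \emph{Monochromatic degree.} The edge-coloring is proper, meaning no two edges of the same color share a vertex. Hence $\Delta(G[c])\le 1$ for every $c$, so $\dcol\le 1$.
\end{itemize}

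It remains to note that colors really are disconnected here. Each source-edge color $a_e$ occupies one edge in each of two vertex-disjoint cycles $Q_u$ and $Q_v$, so $G[a_e]$ consists of two disjoint edges. This is why the hardness is stated under the hypothesis that colors may be disconnected.

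No step is a genuine obstacle. The only points needing a little care are the pathwidth-to-treewidth comparison and the translation of properness into $\dcol\le1$, both of which are immediate.
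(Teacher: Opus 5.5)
Your proposal is correct and follows the paper's route: the paper gives no separate proof and reads the corollary off Theorem~\ref{H:thm:twice} and Lemma~\ref{H:lem:construction}, using exactly the facts you list (cactus graphs are planar, treewidth is at most pathwidth, and a proper coloring gives $\dcol\le1$). Your extra remarks on obtaining the decision answer by comparing $\OPT$ with $\ell$, and on the colors $a_e$ being disconnected, are accurate but only supplementary.
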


\section{Complexity Boundaries for Connected Colors on Cactus Graphs}
\label{sec:structure}
This section studies the relationship between maximum monochromatic degree and computational complexity on cactus graphs with connected color classes. When the maximum monochromatic degree is at most two, each color class is a path or a cycle. At any cutvertex, a color can be adjacent along a cycle to at most two other colors, so the graph describing these local relationships consists only of paths, cycles, and isolated vertices. Combining this observation with the block structure of cactus graphs, we prove that the color--cycle-block incidence graph has treewidth at most three, and use this to obtain an exact algorithm with running time $O(Nk^2)$.

When the maximum monochromatic degree is allowed to reach three, color classes may branch at cutvertices. We construct a cactus graph in which all triangles share one vertex, so that the relationships between different colors forming three-edge stars realize the adjacency relationships of the source 3-regular graph, and transform the number of internal edges in the source graph into the increase in the number of components after deleting colors. This proves that the decision problem is already NP-complete even when each color class is only a single edge or a three-edge star.

\begin{theorem}\label{I:thm:poly}
If every color class of a cactus graph is connected and $\dcol\le2$, then, for $0<h=k-c_0\le |V|-c_0$, the optimum value and an optimal color set for Minimum Label $k$-Cut can be found in $O(N(h+1)^2)$ time.
\end{theorem}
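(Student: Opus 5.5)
By Lemma~\ref{H:lem:component}, for a color set $L$ we have $\Psi(L)=|E(L)|-|\{Q\in\mathcal Q: C(Q)\cap L\neq\varnothing\}|$. Every color class is connected with $\Delta(G[c])\le 2$, so each class is a path or a cycle. In a cactus graph a monochromatic cycle must be an entire cycle block. The objective is therefore additive: each color $c$ contributes its frequency $f(c)$, and each cycle block $Q$ that meets $L$ costs $1$. The plan is to maximize $\Psi(L)$ under a budget $|L|\le b$ for all $0\le b\le\min\{|C|,2h\}$, where the range of $b$ comes from Lemma~\ref{D:lem:budget} and $2$-degeneracy. We then report the smallest $b$ with value at least $h$. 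All stored values can be truncated at $h$, since only reaching $h$ matters. This gives a two-coordinate state space (budget, gain), each of size $O(h+1)$. Reporting the least budget reaching gain $h$ is the same as a table indexed by budget with the gain capped, so combining two subtables is a $(\max,+)$ convolution costing $O((h+1)^2)$.

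\textbf{Structure step.} Bridges carry no penalty, so a color all of whose edges are bridges contributes independently. A path color that is not inside one block enters and leaves blocks only at cutvertices. Inside a cycle block $Q$ it occupies a subpath of $Q$, and because a color is connected it occupies at most one arc of $Q$. I would form the bipartite incidence graph $I$ between colors and cycle blocks, with $c\sim Q$ if $c\in C(Q)$, and prove it has treewidth at most three. The proof would root the block-cut tree and process blocks top-down. The colors on a cycle $Q$ appear in cyclic order as arcs. At the cutvertex $x$ where $Q$ hangs from its parent, at most two colors of $Q$ pass through $x$, namely the arcs incident with $x$ along $Q$. Only these colors can also appear in blocks above. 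Since $\dcol\le 2$, a color continuing through a cutvertex uses at most one edge per side there. A color is a path, so it continues into at most one child block at each of its internal vertices. This gives the local picture noted at the start of the section: a "color adjacency" graph made of paths and cycles. A bag consisting of $Q$ together with the at most two colors passing through its attachment vertex, plus one further color while sweeping around $Q$, should give width three.

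\textbf{Algorithm step.} Rather than invoking generic treewidth DP, I would run the DP directly along the block-cut tree. For each cycle block $Q$ and each subset $S$ of the at most two boundary colors at its attachment vertex, compute a table $T_{Q,S}[b]$. This is the maximum truncated gain within the subtree of $Q$ using $b$ colors whose "first appearance" lies in that subtree, given that exactly the colors in $S$ are selected. To fill it, sweep the colors of $Q$ in cyclic order. Each color either stays within $Q$ or continues into child blocks at its vertices. Merge the children's tables with $(\max,+)$ convolutions, charge $f(c)$ once per selected color, and subtract $1$ for $Q$ if any color on it is selected (one extra bit in the sweep state). A color spanning many blocks is consistent because its membership is passed down as an element of $S$. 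The charge $f(c)$ is booked at the block where the color is topmost, which is unique since the color's blocks form a connected subtree of the block-cut tree.

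\textbf{Obstacle.} The hard part is the running time, more than the treewidth bound itself. A naive convolution per block-cut tree edge costs $O((h+1)^2)$, and there are $O(N)$ such edges, which would give $O(N(h+1)^2)$. The danger is constant-factor state blowup from cycles of colors around a cutvertex, where several blocks share a vertex and colors chain from block to block. Such a chain must be handled by a cyclic DP that fixes the state of one color and then proceeds linearly, still with $O(1)$ states. I would first try to show that at each cutvertex the colors through it form disjoint pairs, by $\dcol\le2$. Then the children at a vertex can be merged sequentially with constant state, and the total work is $O(N(h+1)^2)$. Recovering an optimal set is standard backtracking through the stored argmax choices.
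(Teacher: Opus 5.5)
Your plan follows the paper's proof closely. It uses the formula $\Psi(L)=\sum_{c\in L}f(c)-|N_I(L)|$, the budget range $b\le\min\{|C|,2h\}$ from $2$-degeneracy, a bounded-width structure of $I$ organized along the block--cutvertex tree, and budget-indexed tables merged by $(\max,+)$ convolution at $O((h+1)^2)$ per tree edge. Running the DP directly on the block--cutvertex tree instead of on an explicit width-$3$ decomposition is only a repackaging.

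\textbf{The step that fails.} The claim you would ``first try to show'' to dispose of the obstacle is false: the colors through a cutvertex need not form disjoint pairs. Take three triangles $va_1b_1$, $va_2b_2$, $va_3b_3$ sharing $v$. Color them by $\col(va_1)=\col(vb_3)=\alpha$, $\col(vb_1)=\col(va_2)=\beta$, $\col(vb_2)=\col(va_3)=\gamma$, with private colors on the edges $a_ib_i$. Every color class is a connected path and $\dcol=2$. Yet the three blocks couple $\alpha,\beta,\gamma$ cyclically, and $I$ contains a $6$-cycle through them.

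What $\dcol\le2$ really gives is the fact you state in your structure step and the paper proves. Let $R_v$ be the graph on the colors at $v$, with an edge $cd$ for every cycle block whose two edges at $v$ carry $c$ and $d$. Then $d_{R_v}(c)\le d_{G[c]}(v)\le2$, so its components are paths and cycles, and some of them do not meet the parent block at all. Each component must be processed with the cyclic sweep you mention: fix one color's bit, sweep linearly, close up. In the paper this is the width-$2$ tree decomposition of $R_v$. It is glued to the cycle-block decompositions through the intersections $X_Q\cap X_v$, which have size at most two.

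\textbf{The sweep around $Q$.} For a cycle block $Q$ with arc colors $c_1,\dots,c_m$, bags consisting of $Q$, the attachment colors and \emph{one} further color do not form a tree decomposition of $I$. Consider a cutvertex $v$ of $Q$ between the arcs of $c_i$ and $c_{i+1}$. A child cycle $Q'$ may have its two edges at $v$ colored $c_i$ and $c_{i+1}$. This gives the $4$-cycle $Q\,c_i\,Q'\,c_{i+1}$ in $I$, so some bag on $Q$'s side must hold $c_i$ and $c_{i+1}$ together. The paper uses the fan bags $\{Q,c_1,c_i,c_{i+1}\}$. For your direct DP, carrying the boundary bits, the hit bit for $Q$, and the previous and current arc colors is still $O(1)$ states, so the time bound survives.

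\textbf{Truncation.} Drop the truncation at $h$. Since you book $f(c)$ at the topmost block of $c$, a subtree's value can be negative. It pays $-1$ for every block hit by a selected boundary color whose frequency is counted higher up. Capping partial values at $h$ therefore needs a separate argument, and it gains nothing because your tables are indexed by budget alone.
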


\subsection{Incidence Structure Between Colors and Cycle Blocks}

By~\eqref{H:eq:gain}, the contribution of the number of edges of each color can be computed separately, whereas the deduction for a cycle block depends on whether any color on it is selected. To describe these relationships, we define the color--cycle-block incidence graph $I$, which is a bipartite graph whose two vertex sets correspond to the color set $C$ and the set $\mathcal Q$ of cycle blocks. An edge is added between $c$ and $Q$ if and only if color $c\in C$ appears on cycle block $Q\in\mathcal Q$. If the same color appears more than once on a cycle, only one incidence edge is added. Since $\sum_{Q\in\mathcal Q}|C(Q)|\le |E|$, the size of $I$ is $O(N)$.

For a color set $L\subseteq C$, its neighbor set $N_I(L)$ in $I$ consists precisely of all cycle blocks containing edges of selected colors. By Lemma~\ref{H:lem:component},
\begin{equation}\label{I:eq:gain}
 \Psi(L)=\sum_{c\in L}f(c)-|N_I(L)|.
\end{equation}

\begin{lemma}\label{I:lem:width}
If every color class of a cactus graph $G$ is connected and $\dcol\le2$, then a tree decomposition of $I$ of width at most three with $O(N)$ bags can be constructed in $O(N)$ time.
\end{lemma}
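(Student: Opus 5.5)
We would build the decomposition along the block--cut tree of $G$, using two facts: every cycle block and every cutvertex has a small interface, and inside each piece the incidences form only paths and cycles. We may treat each connected component of $G$ separately and join the resulting decompositions arbitrarily. We also root the block--cut tree of each component at some block.

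The first step is a structural description of a single color. Since $G[c]$ is connected and $\Delta(G[c])\le 2$, $G[c]$ is a path or a cycle. Every cycle of a cactus graph is a cycle block, so a cycle color is an entire cycle block and is incident in $I$ with exactly that $Q$. For a path color, its intersection with a cycle block $Q$ is either all of $Q$ or a single contiguous arc of $Q$. The reason is that the path can enter and leave $Q$ only through vertices of $Q$, and a second arc would close a second cycle through $Q$'s cutvertices, which a cactus does not allow. Hence the colors of $C(Q)$ appear around $Q$ in a cyclic order $c_1,\dots,c_m$ of arcs. Moreover, at each vertex $y$ of $Q$ the two edges of $Q$ at $y$ carry colors that are consecutive in this order (or equal).

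The second step is the local picture at a cutvertex $x$. Each block at $x$ has at most two edges at $x$, namely two if it is a cycle and one if it is a bridge. Each color has at most two edges at $x$, so it meets at most two blocks at $x$. Form the auxiliary graph $A_x$ whose nodes are the blocks at $x$, joining two blocks whenever a color has an edge at $x$ in each. Then $A_x$ has maximum degree two, so it is a disjoint union of paths and cycles $B_1,B_2,\dots$, linked by colors $e_1,e_2,\dots$. Call the colors of a block $B$'s edges at $x$ its \emph{$x$-interface}; it has size at most two. A color of $Q$ that also occurs outside the subtree of $Q$ must pass through the parent cutvertex of $Q$, so it lies in the interface there. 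Thus $\{Q\}\cup(\text{interface at the parent})$, of size at most three, separates the subtree of $Q$ from the rest of $I$.

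The third step assembles the bags.
\begin{itemize}
\item For a cycle block $Q$ with parent cutvertex $x$, we index the cyclic order so that $c_1$ and $c_m$ are the two colors of $Q$ at $x$. We then use the fan-like path of bags $\{Q,c_1,c_i,c_{i+1}\}$ for $1\le i<m$. This covers all incidences $Qc_i$ and contains the parent interface $\{Q,c_1,c_m\}$ in the last bag.
\item For each child cutvertex $y$ of $Q$, the colors of $Q$ at $y$ are some consecutive pair $c_i,c_{i+1}$. We attach the decomposition for $y$ to the bag $\{Q,c_1,c_i,c_{i+1}\}$.
\item At a cutvertex $y$, we traverse each path or cycle component of $A_y$ with bags $\{c^*,e_{j-1},e_j\}$, where $c^*$ is a fixed linking color added only in the cycle case to close the cycle. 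To the bag for block $B_j$ we hang the decomposition of the subtree of $B_j$ via the bag $\{Q_j,e_{j-1},e_j,c^*\}$, which has size four.
\item A bridge block contributes no vertex of $\mathcal Q$, so its single color just appears in the linking bags.
\end{itemize}
The vertices and edges of $I$ are covered by construction. The running intersection property follows because each color's occurrences lie along the path of the block--cut tree that it traverses, and along that path its bags are consecutive: an arc inside each $Q$, and a link in each $A_y$. Every bag has at most four vertices, so the width is at most three. The number of bags is $O(\sum_Q|C(Q)|+|E|)=O(N)$, and everything is computed by one DFS for blocks plus one traversal, in $O(N)$ time.

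The main obstacle is the verification of the running intersection property for colors that cross many cutvertices. Such a color appears in the fan bags of several cycles and in the chain bags of the $A_y$ in between. We must check that the gluing bags always contain it, which is exactly the statement that a color leaving $Q$ through $y$ belongs to $Q$'s $y$-interface. The cycle case of $A_y$, which forces the extra vertex $c^*$, is where the width bound is tight.
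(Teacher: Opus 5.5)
Your proposal is correct and is essentially the paper's proof. Both glue local decompositions along the block--cutvertex forest through interfaces of at most two colors, use the same fan bags $\{Q,c_1,c_i,c_{i+1}\}$ on each cycle block, and rely on the local structure at a cutvertex being a disjoint union of paths and cycles. Your block graph $A_x$ is just the dual of the paper's color graph $R_v$, in which colors are vertices and cycle blocks are edges.

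A few things are left implicit and need a word. The parent block's fan bag must be glued to the chain bag of that same block in $A_y$. The case $m=1$ (a cycle color) needs the bag $\{Q,c_1\}$. A component consisting of a single bridge needs its own bag for that bridge's color, as in the paper's $X_e=\{\col(e)\}$.
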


\begin{proof}
To construct a tree decomposition of $I$, we need to cover the incidence edges between colors and cycle blocks and ensure that the bags containing the same color form a connected set. The block--cutvertex forest of a cactus graph records how its blocks are connected. We use this forest to organize the connections between bags, placing the corresponding color vertices and cycle-block vertices in each bag. Consider the block--cutvertex forest of $G$. Its nodes are of two types, corresponding to the blocks of $G$ that contain edges and to its cutvertices; when a cutvertex $v$ belongs to a block $B$, an edge is added between the corresponding nodes. For a cycle-block node $Q$, set $X_Q=C(Q)\cup\{Q\}$; for a bridge node $e$, set $X_e=\{\col(e)\}$; for a cutvertex node $v$, set $X_v=\{c\in C:d_{G[c]}(v)>0\}$. We next construct a local graph within each bag, give a tree decomposition of width at most three for each local graph, and then connect these decompositions along the block--cutvertex forest.

Since every color class is connected, the block nodes and cutvertex nodes containing that color form a subtree of the block--cutvertex forest. Each connected component outside a block is adjacent to at most one vertex of that block, so a simple path cannot leave the block and then return. This shows that each color is also connected within every block in which it appears.

Let a cutvertex $v$ belong to a cycle block $Q$, and let $e_1,e_2$ be the two edges of $Q$ incident to $v$. By the definitions of the bags, the colors of both edges belong to $X_Q\cap X_v$. Conversely, if $c\in X_Q\cap X_v$, connectivity of the color class implies that there is a simple path of color $c$ from $v$ to another vertex in $Q$. This path lies entirely in $Q$, so at least one of $e_1,e_2$ has color $c$. Thus $X_Q\cap X_v=\{\col(e_1),\col(e_2)\}$, and this intersection contains at most two colors. For a bridge node $e$ and an adjacent cutvertex node $v$, we have $X_e\cap X_v=\{\col(e)\}$.

For each bag $X$, construct a local graph on vertex set $X$, starting with the induced subgraph $I[X]$. For each edge of the block--cutvertex forest, let $X,Y$ be the bags corresponding to its endpoints. If $X\cap Y=\{c,d\}$, add the edge $cd$ to both local graphs. Since the intersection of adjacent bags contains at most two vertices, after these edges are added, each intersection forms a clique in both local graphs.

Let $R_v$ denote the local graph corresponding to a cutvertex $v$. Each edge $cd$ of this graph comes from a cycle block whose two edges at $v$ have colors $c,d$, respectively. Each edge of color $c$ incident to $v$ creates at most one neighbor of $c$ in $R_v$. Hence $d_{R_v}(c)\le d_{G[c]}(v)\le2$. Therefore, every connected component of $R_v$ is a path, a cycle, or an isolated vertex, and a tree decomposition of $R_v$ of width at most two can be constructed in linear time.

Next consider the local graph corresponding to a cycle block $Q$. Since the edges of each color on $Q$ are connected, they form a path on $Q$ or cover the entire cycle. Denote these colors by $c_1,\ldots,c_m$ in their order around the cycle. In the local graph, $Q$ is adjacent to all these colors, and the added edges between colors join only pairs consecutive in the cyclic order. When $m\ge3$, the bags $\{Q,c_1,c_i,c_{i+1}\},\qquad 2\le i\le m-1$, taken in this order, cover all local edges, and each vertex occurs in consecutive bags; this decomposition has width at most three. When $m\le2$, take $C(Q)\cup\{Q\}$ as the only bag.

For a bridge node $e$, its local graph consists only of the color vertex $\col(e)$, so take $\{\col(e)\}$ as the only bag.

For every pair of adjacent nodes in the block--cutvertex forest, the intersection of their bags contains at most two vertices; if it contains two, these vertices are already joined in both local graphs. By the vertex and edge coverage requirements of a tree decomposition, each of the two local decompositions has a bag containing the entire intersection. Connect these two bags. Since the local decompositions are connected according to the structure of the block--cutvertex forest, the resulting structure is still a forest.

The construction covers all vertices of $I$, and every edge $cQ$ is covered by the local decomposition for cycle block $Q$. For a color $c$, the nodes corresponding to the original bags containing $c$ form a subtree of the block--cutvertex forest. Within each local decomposition, the bags containing $c$ form a connected set; when local decompositions are connected along this subtree, both bags chosen for each connection also contain $c$. Thus, all bags containing $c$ remain connected after the decompositions are joined. A cycle-block vertex $Q$ appears only in its own local decomposition, where its bags are likewise connected. Finally, add an empty bag and connect it to one bag in each decomposition tree to obtain a tree decomposition of $I$.

The total number of incidences between blocks and colors is at most $|E|$, and the total number of incidences between cutvertices and colors is at most $2|E|$. The number of added local edges is also $O(N)$. Hence the total size of all local graphs and their tree decompositions is $O(N)$. While constructing each local tree decomposition, also record the bags used for joining it to the others. Each edge of the block--cutvertex forest corresponds to one joining operation. The entire joining process takes $O(N)$ time, so all connections can be completed in linear time.
\end{proof}

\subsection{An Exact Algorithm for Maximum Monochromatic Degree at Most Two}
By Lemma~\ref{D:lem:budget}, an optimal solution selects at most $2h$ colors. Let $B=\min\{|C|,2h\}$. For $0\le j\le B$, write
\[
 M_j=\max_{\substack{L\subseteq C\\|L|=j}}\Psi(L).
\]
Introduce a binary variable $x_c$ for each color and a binary variable $y_Q$ for each cycle block. By~\eqref{I:eq:gain},
\begin{equation}\label{I:eq:implication}
 M_j=
 \max\left\{
   \sum_{c\in C}f(c)x_c-\sum_{Q\in\mathcal Q}y_Q:\sum_{c\in C}x_c=j,\quad x_c\le y_Q\ \text{for every }cQ\in E(I)
 \right\}.
\end{equation}
Here, $x_c=1$ means that color $c$ is selected. If a selected color occurs on cycle $Q$, the constraints force $y_Q=1$; otherwise, setting $y_Q$ to zero improves the objective value. Thus, in an optimal solution, $y_Q$ indicates precisely whether any edge of cycle $Q$ is deleted.

\begin{proof}[Proof of Theorem~\ref{I:thm:poly}]
By Lemma~\ref{I:lem:width}, construct a tree decomposition of $I$ of width at most three with $O(N)$ bags, choose any bag as the root bag, and determine the parent and child bags accordingly. Assign each variable to the bag closest to the root that contains its corresponding vertex, and count its objective contribution and the corresponding color budget only at that bag. For each constraint $x_c\le y_Q$, choose any bag containing both $c$ and $Q$, and check the constraint only at that bag.

For a bag $X$, let $S_X$ be its intersection with its parent bag; if $X$ is the root bag, set $S_X=\varnothing$. By the properties of a tree decomposition, if a variable occurs both in the subtree rooted at $X$ and outside that subtree, it must belong to $S_X$. Thus, when combining the subtree with the rest of the decomposition, it suffices to ensure consistency of the values on $S_X$ and to add the numbers of colors counted in the different parts. Feasible solutions with the same boundary assignment and color count have the same conditions for compatibility with the rest of the decomposition, so only the one with the largest contribution needs to be retained. Define the state $T_X(\sigma,j)$ to be the maximum total contribution of all variables assigned to bags in this subtree, subject to the boundary variables taking the values specified by $\sigma$, all constraints assigned to the subtree being satisfied, and exactly $j$ selected colors being counted. Here, $\sigma$ is a binary assignment to $S_X$; a color variable $x_c$ contributes $f(c)x_c$, and a cycle-block variable $y_Q$ contributes $-y_Q$.

To compute $T_X(\sigma,j)$, enumerate the values of the variables in $X\setminus S_X$ and combine them with the boundary assignment $\sigma$ to obtain an assignment $\eta$ to bag $X$. Thus, $\eta|_{S_X}=\sigma$. If $\eta$ violates any constraint assigned to bag $X$, discard this assignment. This bag counts only the color budget and objective contribution of variables in $X\setminus S_X$. Denote these quantities by
\[
 b_X(\eta)
 =\sum_{c\in C\cap(X\setminus S_X)}\eta(c)
\]
and
\[
 w_X(\eta)
 =\sum_{c\in C\cap(X\setminus S_X)}f(c)\eta(c)
  -\sum_{Q\in\mathcal Q\cap(X\setminus S_X)}\eta(Q).
\]

Let $\mathcal K_X$ be the set of child bags of $X$. For each child bag $Y$, the values on its boundary $S_Y=X\cap Y$ are determined by $\eta$, so we use the child state $T_Y(\eta|_{S_Y},j_Y)$. Here, $\eta|_{S_Y}$ denotes the restriction of $\eta$ to $S_Y$, and $j_Y$ is the color count of this child state. The sum of the color counts contributed by this bag and the child subtrees must equal $j$, and their objective contributions can be added. This gives the recurrence
\begin{equation}\label{I:eq:dp}
 T_X(\sigma,j)=
 \max_{\substack{
\eta|_{S_X}=\sigma,\ \eta\text{ satisfies the constraints assigned to }X\\
   j_Y\ge0,\quad b_X(\eta)+\sum_{Y\in\mathcal K_X}j_Y=j}}\left(w_X(\eta)+\sum_{Y\in\mathcal K_X}T_Y(\eta|_{S_Y},j_Y)\right).
\end{equation}
If $X$ is a leaf bag, there are no child states, and it suffices to maximize $w_X(\eta)$ over feasible assignments satisfying $b_X(\eta)=j$. If no assignment satisfies the conditions, set $T_X(\sigma,j)=-\infty$; otherwise, $T_X(\sigma,j)$ is the maximum contribution attainable by a feasible assignment, which may be negative.

The connectedness condition of the tree decomposition ensures that variables shared by different child subtrees occur in the current bag, so their values are consistently determined by $\eta$. The weight and color budget of each variable are counted only once, at the bag to which it is assigned, and each constraint is checked exactly once. Induction from the leaves toward the root of the tree decomposition then establishes the correctness of the states and recurrence. At the root bag $R$, we have $T_R(\varnothing,j)=M_j$.

The dynamic program needs to consider only color counts $0\le j\le B$. Each bag contains at most four variables, so at most $2^4$ binary assignments need to be enumerated. For each assignment satisfying the constraints of the bag, first count the colors and weights associated with the variables assigned to this bag, and then merge the dynamic programming tables of its child bags one at a time. Each merge only needs to enumerate the color counts of the part already merged and the current child bag; both counts are at most $B$, so the merge takes $O((B+1)^2)$ time. The tree decomposition has $O(N)$ bags, and each decomposition edge participates in only a constant number of merges. The total running time is therefore $O(N(B+1)^2)=O(N(h+1)^2)$. After obtaining the values $M_j$ at the root bag, return the smallest $j$ satisfying $M_j\ge h$, and recover the corresponding color set using the assignments and allocations of color counts stored in the transitions. For the decision version with budget $\ell$, it suffices to further truncate the range of color counts to $\min\{B,\ell\}$.
\end{proof}

\subsection{NP-Completeness for Maximum Monochromatic Degree at Most Three}

We now retain color connectivity and allow the maximum monochromatic degree to reach three. We first give a construction in which deleting the colors corresponding to a set of source vertices increases the number of connected components by exactly the number of edges induced by those vertices in the source graph.

Given a connected simple graph $H=(U,A)$ with at least two vertices, create a common vertex $o$. For each source edge $e=uv$, introduce two new vertices $x_e,y_e$ and construct a triangle $ox_ey_e$. Each source vertex $v$ corresponds to a color $a_v$,
and the edges of the triangle are colored as follows:
\[
 \col(ox_e)=a_u,\qquad
 \col(oy_e)=a_v,\qquad
 \col(x_ey_e)=z_e,
\]
where $z_e$ is a private color that occurs only once. Denote the resulting graph by $G$. It is a connected simple cactus graph whose triangles share only the vertex $o$. Each color class $a_v$ forms a star centered at $o$ with $d_H(v)$ edges.

\begin{lemma}\label{D:lem:windmill}
The construction satisfies the following equality for every integer $0\le t\le |U|$:
\begin{equation}\label{D:eq:windmill}
 \max_{\substack{L\subseteq C\\|L|\le t}}\Psi(L)=\max_{\substack{S\subseteq U\\|S|=t}}|E(H[S])|.
\end{equation}
\end{lemma}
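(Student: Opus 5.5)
We compute $\Psi(L)$ through Lemma~\ref{H:lem:component}. Write any color set as $L=\{a_v:v\in S\}\cup\{z_e:e\in P\}$ with $S\subseteq U$ and $P\subseteq A$. Each triangle $T_e$, for $e=uv$, is a cycle block. The deleted edges are the $\sum_{v\in S}d_H(v)$ star edges plus the $|P|$ private edges. A triangle loses an edge iff $e\in P$ or $e$ has an endpoint in $S$. Let $\partial S$ denote the set of source edges with at least one endpoint in $S$. Since $\sum_{v\in S}d_H(v)=2|E(H[S])|+|E(S,U\setminus S)|$ and $|\partial S|=|E(H[S])|+|E(S,U\setminus S)|$, equation~\eqref{H:eq:gain} gives
\[
 \Psi(L)=|E(H[S])|+|P|-|P\cup\partial S|=|E(H[S])|-|\partial S\setminus P|+|P\cap\partial S|-|P\cap\partial S|+\dots
\]
More cleanly, $\Psi(L)=|E(H[S])|+|P\cap\partial S|$, since $|P|-|P\cup\partial S|=|P\cap\partial S|-|\partial S|$. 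This identity is the first step.

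For the inequality ``$\ge$'': take $S$ with $|S|=t$ attaining the right-hand side and $L=\{a_v:v\in S\}$. Then $|L|=t$ and $\Psi(L)=|E(H[S])|$.

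For ``$\le$'': given $L$ with $|L|\le t$, so $|S|+|P|\le t$, we transform it into a set of $t$ vertex colors without decreasing $\Psi$. Each private color $z_e$ with $e\in P\cap\partial S$ contributes exactly $1$. We replace it by $a_w$, where $w$ is an endpoint of $e$ not already in $S$. If both endpoints of $e$ are already in $S$, then $e\in E(H[S])$, and the second endpoint does not exist outside $S$. In that case we note $|S|\le t-1$ and add any vertex of $U\setminus S$; this exists because $|S|<t\le|U|$. When $e=uw$ with $u\in S$ and $w\notin S$, adding $w$ makes $e$ induced, which recovers its unit. Private colors outside $\partial S$ contribute $0$ and are simply replaced by arbitrary new vertices. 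Adding vertices never decreases $|E(H[S])|$, so after these replacements we have a vertex set $S'$ with $|S'|\le t$ and $|E(H[S'])|\ge\Psi(L)$. We then pad $S'$ arbitrarily to size exactly $t$, using $t\le|U|$ and monotonicity.

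The main obstacle is the exchange bookkeeping. Several private edges may point to the same vertex $w$, and a single replacement $a_w$ cannot then pay for all of them. My first attempt at a fix is to process $P\cap\partial S$ one edge at a time, recomputing $\partial S$ after each step. After $w$ is added, the other edges $e'=u'w$ with $u'\in S$ become induced. Their $z_{e'}$ colors still lie in $P\cap\partial S$, but each now contributes $1$ through the count $|P\cap\partial S|$ as well as through $|E(H[S])|$. Removing such a $z_{e'}$ frees one budget slot and loses one unit, and the freed slot can be spent on any vertex without loss.

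The cleanest route is to prove by induction on $|P|$ that $\Psi(L)\le\max_{|S'|=t}|E(H[S'])|$ using the formula $\Psi=|E(H[S])|+|P\cap\partial S|$. At each step we delete one $z_e$ and add one endpoint, or any vertex, and check that $|E(H[S])|+|P\cap\partial S|$ does not drop. A drop cannot occur: the deleted $z_e$ loses at most $1$, and adding the outside endpoint $w$ of $e$ gains at least $1$ because $e$ becomes induced.
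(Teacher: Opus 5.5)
There is a genuine gap in the ``$\le$'' direction. Your identity $\Psi(L)=|E(H[S])|+|P\cap\partial S|$ is correct, and the ``$\ge$'' direction is fine. (Your first display drops the term $|\partial S|$ that comes from $\sum_{v\in S}d_H(v)=|E(H[S])|+|\partial S|$, but the ``more cleanly'' version is right.)

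The problem is that your argument never uses that $H$ is connected, and without connectivity the lemma is false. For example, let $H$ be the two disjoint edges $u_1u_2$ and $u_3u_4$, and let $t=3$. Then $L=\{a_{u_1},a_{u_2},z_{u_1u_2}\}$ deletes a whole triangle, so $\Psi(L)=2$. Every $3$-vertex subset of $U$ induces at most one edge.

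The step that fails is the case where $z_e\in P$ and both endpoints of $e$ already lie in $S$. There $e$ is counted once in $|E(H[S])|$ and once more in $|P\cap\partial S|$. Removing $z_e$ therefore costs a full unit, and adding an arbitrary vertex of $U\setminus S$ may gain nothing. This case is unavoidable in your induction: after you add $w$ for $e=uw$, every other $e'=u'w\in P$ with $u'\in S$ falls into it. Your phrase ``the freed slot can be spent on any vertex without loss'' and your closing claim that ``a drop cannot occur'' both overlook this. The closing justification only covers edges $e\in P$ that have an endpoint outside $S$.

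The repair is to invoke connectivity exactly here. In this case $S\neq\varnothing$ and $|S|\le t-|P|<t\le|U|$. So $H$ has an edge from $S$ to some $w\notin S$, and adding $a_w$ creates at least one new induced edge. Since $\partial S$ only grows, $|E(H[S])|+|P\cap\partial S|$ does not decrease. With this change your induction on $|P|$ goes through; edges of $P$ outside $\partial S$ cost nothing to remove, as you say.

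The paper avoids the per-edge exchange bookkeeping altogether. It bounds the private-color term by $|P|\le t-|S|$ in one step. Then, for $S\neq\varnothing$, it grows $S$ by $t-|S|$ vertices, each adjacent to the current set by connectivity, so each adds at least one induced edge. The case $S=\varnothing$ gives $\Psi(L)=0$ directly.
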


\begin{proof}
For any color set $L$, write
\[
 S=\{v\in U:a_v\in L\},\qquad T=\{e\in A:z_e\in L\}.
\]
Then $|L|=|S|+|T|$.

For the triangle corresponding to a source edge $e=uv$, if its private color $z_e$ is not selected, it increases the number of components by one only when both $a_u,a_v$ are selected. If $z_e$ is selected, the increase contributed by this triangle equals the number of selected colors among $a_u,a_v$. Applying~\eqref{H:eq:gain} to each triangle gives
\begin{equation}\label{D:eq:private}
 \Psi(L)=|E(H[S])|+|\{e\in T:e\cap S\ne\varnothing\}|.
\end{equation}

Take any color set $L$ satisfying $|L|=|S|+|T|\le t$. By~\eqref{D:eq:private}, $\Psi(L)\le |E(H[S])|+|T|\le |E(H[S])|+t-|S|$. If $S\ne\varnothing$ and $|S|<t$, connectivity of $H$ implies that some vertex outside $S$ is adjacent to a vertex in $S$. Adding it to $S$ creates at least one additional internal edge. Repeating this operation extends $S$ to a set $S'$ of size exactly $t$ satisfying $|E(H[S'])|\ge |E(H[S])|+t-|S|\ge \Psi(L)$. If $|S|=t$, simply take $S'=S$. If $S=\varnothing$, then~\eqref{D:eq:private} gives $\Psi(L)=0$, and any vertex set of size $t$ satisfies the required inequality. Therefore, 
\[
\max_{\substack{L\subseteq C\\|L|\le t}}\Psi(L)\leq\max_{\substack{S\subseteq U\\|S|=t}}|E(H[S])|.
\]
Conversely, take any vertex set $S$ satisfying $|S|=t$, and let $L=\{a_v:v\in S\}$. Then $|L|=t$, and~\eqref{D:eq:private} gives $\Psi(L)=|E(H[S])|$. Hence 
\[
\max_{\substack{L\subseteq C\\|L|\le t}}\Psi(L)\geq\max_{\substack{S\subseteq U\\|S|=t}}|E(H[S])|.
\]
Combining the two directions proves~\eqref{D:eq:windmill}.
\end{proof}

\begin{theorem}\label{D:thm:np}
The decision version of Minimum Label $k$-Cut is NP-complete when the input is a connected simple cactus graph whose triangles share only one common vertex, every color class is connected, and the maximum monochromatic degree is at most three.
\end{theorem}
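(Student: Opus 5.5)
Membership in NP is immediate: given a color set $L$ with $|L|\le\ell$, we delete $E(L)$ and count the components of $G-E(L)$ by a graph search in polynomial time. For hardness we reduce from a problem of the form ``does a graph have $t$ vertices inducing at least $m$ edges''. The cleanest choice is \textsc{Clique} on 3-regular graphs, which is NP-complete; the introductory overview of this section also mentions a 3-regular source graph. Take an instance $(H,t)$ with $H$ connected and 3-regular; if $H$ is disconnected we may work with each component separately, or assume connectivity, since \textsc{Clique} restricted to connected cubic graphs remains NP-complete. Because a cubic graph has no $K_5$, this instance is only interesting for $t\le 4$. That makes the restriction too weak, so we use instead the densest-$t$-subgraph form of the problem: ``is there $S\subseteq U$ with $|S|=t$ and $|E(H[S])|\ge m$?'' This problem is NP-complete on cubic graphs. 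For $m=3t/2$ it asks whether $t$ vertices induce a 3-regular subgraph, i.e.\ a union of components, and that case is easy on a connected graph. We therefore rely on the known NP-hardness of maximum $t$-vertex induced edges (equivalently minimum bisection-type, or Independent Set via the complement count) on cubic graphs. Concretely, use \textsc{Independent Set} on cubic graphs. In a cubic graph, $|E(H[S])| = (3|S| - e(S,U\setminus S))/2$. Hence maximizing the number of edges inside $S$ is equivalent to minimizing the number of edges leaving $S$. Applying this identity to $U\setminus S$ shows that $U\setminus S$ is independent exactly when $S$ induces $|A|-3(|U|-|S|)$ edges, the maximum possible for its size. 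Therefore, with $t=|U|-s$ and $m=|A|-3s$, the graph $H$ has an independent set of size $s$ iff some $t$-set induces at least $m$ edges.

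Given such an instance, we build the windmill graph $G$ described above from $H$ and set $\ell=t$ and $k=1+m$. Here $c_0=1$ because $G$ is connected, so $h=m$. The structural properties follow directly from the construction. The graph $G$ is a connected simple cactus, and all its triangles meet only at $o$. Each class $a_v$ is a star at $o$ with $d_H(v)=3$ edges, and each $z_e$ is a single edge. Consequently every color class is connected and $\dcol\le 3$. The construction has size $O(|A|)$.

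Correctness is then a direct application of Lemma~\ref{D:lem:windmill}. A color set $L$ with $|L|\le\ell$ and $\Psi(L)\ge h$ exists iff $\max_{|L|\le t}\Psi(L)\ge m$. By \eqref{D:eq:windmill} this holds iff some $S$ with $|S|=t$ has $|E(H[S])|\ge m$, which by the previous paragraph holds iff $H$ has an independent set of size $s$. Lemma~\ref{D:lem:windmill} needs $H$ connected with at least two vertices and $0\le t\le|U|$. We arrange this by restricting Independent Set to connected cubic graphs, where it is still NP-complete, and by handling the trivial values $s=0$ or $s>|U|$ directly.

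The main obstacle is choosing a source problem that is NP-complete on connected graphs with maximum degree three. The complement identity above requires regularity, and only degree at most three keeps $\dcol\le3$. I would cite NP-completeness of Independent Set on cubic (3-regular) connected planar graphs, and verify the identity $|E(H[U\setminus I])|=|A|-3|I|$ for independent $I$, together with the reverse inequality $|E(H[U\setminus I])|\le |A|-3|I|$ in general, with equality exactly when $I$ is independent. The reverse inequality holds because $|A|-|E(H[U\setminus I])|$ counts the edges meeting $I$, which is at least $3|I|$ minus the number of edges inside $I$. It is actually $3|I|-|E(H[I])|$, so equality holds iff $I$ is independent.
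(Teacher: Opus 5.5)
Your windmill construction, the choice $\ell=t$, $k=1+m$, the structural checks, NP membership and the appeal to Lemma~\ref{D:lem:windmill} are all fine. The proof fails at the source-problem equivalence, because your complement identity points the wrong way. You correctly note that $U\setminus S$ is independent exactly when $|E(H[S])|=|A|-3(|U|-|S|)$. But that value is the \emph{minimum} over $t$-sets, not the maximum. Take $I=U\setminus S$ with $|I|=s$ in a cubic graph. The edges meeting $I$ number exactly $3|I|-|E(H[I])|$, as you compute yourself at the end, so
\[
|E(H[S])| \;=\; |A|-3s+|E(H[I])| \;\ge\; |A|-3s \;=\; m .
\]
Thus \emph{every} $t$-set induces at least $m$ edges. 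Your target question, whether some $S$ with $|S|=t$ has $|E(H[S])|\ge m$, is always a yes-instance for $0\le s\le |U|$. The reduction therefore sends no-instances to yes-instances. For example, $H=K_4$ with $s=2$ gives $t=2$ and $m=0$, yet $K_4$ has no independent set of size two.

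Adjusting parameters will not repair this. Independent Set corresponds to a \emph{sparsest}-$t$-subgraph condition $|E(H[S])|\le m$, whereas $\Psi$ is maximized. Lemma~\ref{D:lem:windmill} can only encode conditions of the form ``some $t$-set induces at least $m$ edges.'' Passing to complements does not help either, since maximizing $|E(H[S])|$ is the same as maximizing $|E(H[I])|$, again a densest-subgraph question.

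You need a source problem that is NP-hard on connected graphs of maximum degree three and whose yes-condition is ``many induced edges.'' The paper uses \textsc{Minimum Bisection} on connected 3-regular graphs with $|U|=2t$. For $|S|=t$, the identity $3t=2|E(H[S])|+|\partial_H(S)|$ turns a small cut into a large induced subgraph: $|\partial_H(S)|\le b$ if and only if $|E(H[S])|\ge\lceil(3t-b)/2\rceil$. One then sets $\ell=t$ and $k=1+\lceil(3t-b)/2\rceil$, and handles $b\ge 3t$ by outputting a trivial yes-instance. With that source problem, the rest of your argument goes through unchanged.
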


\begin{proof}
We reduce from \textsc{Minimum Bisection} on connected simple 3-regular graphs. Given a connected simple 3-regular graph $H$ with $2t$ vertices and a nonnegative integer $b$, this problem asks whether there is a set $S$ of size $t$ with at most $b$ edges between $S$ and its complement. NP-completeness under this restriction can be found in~\cite{JJK26}.

Since $|E(H)|=3t$, the source instance is necessarily a yes-instance when $b\ge3t$. In this case, output a triangle whose three edges have distinct colors, and set $k=2$, $\ell=2$. Deleting any two colors leaves exactly two connected components, so the output instance is also a yes-instance and satisfies the required graph and coloring restrictions. Henceforth, consider $0\le b<3t$.

Apply the triangle construction above to $H$, and set $\ell=t,k=1+\left\lceil\frac{3t-b}{2}\right\rceil$. Since $G$ is connected, the required increase in the number of components is $k-1=\left\lceil\frac{3t-b}{2}\right\rceil$. Let $\partial_H(S)$ be the set of source edges with exactly one endpoint in $S$. For any set with $|S|=t$, 3-regularity gives $3t=2|E(H[S])|+|\partial_H(S)|$. Thus,
\[
 |\partial_H(S)|\le b \quad\Longleftrightarrow\quad |E(H[S])| \ge \left\lceil\frac{3t-b}{2}\right\rceil.
\]
By Lemma~\ref{D:lem:windmill}, there is a vertex set $S$ of size $t$ satisfying the inequality on the right if and only if there are at most $t$ colors whose deletion leaves $G$ with at least $k$ connected components. This proves the equivalence of the reduction.

The triangles intersect only at vertex $o$, so $G$ is a connected simple cactus graph. Since $H$ is 3-regular, the subgraph of each color $a_v$ is a $K_{1,3}$, whereas each private color occurs only once. Thus every color class is connected and the maximum monochromatic degree is three. The construction has size $O(|U|+|A|)$ and can be completed in polynomial time.

Finally, given a color set, its size and the number of connected components after deleting the corresponding edges can be checked in polynomial time, so the decision problem belongs to NP.
\end{proof}

Theorems~\ref{I:thm:poly} and~\ref{D:thm:np} establish the monochromatic-degree boundary on cactus graphs with connected color classes: the problem is polynomial-time solvable when the maximum monochromatic degree is at most two; when the maximum monochromatic degree is allowed to reach three, the decision problem is already NP-complete even if each color occurs at most three times.

\section{Parameterized Hardness on Outerplanar Graphs with Path Color Classes}\label{sec:fans}
Theorem~\ref{I:thm:poly} shows that Minimum Label $k$-Cut can be solved in polynomial time when the underlying graph is a cactus and every color class is a path or a cycle. In this section, we retain the restriction that every color class is a path and investigate whether this tractability extends to outerplanar graphs. We prove that the problem remains $\Wone$-hard parameterized by $k$ even when the underlying graph is a fan graph. Thus, unless $\mathrm{FPT}=\Wone$, the preceding polynomial-time algorithm cannot be extended to all outerplanar graphs even under the restriction to path color classes.

For an integer $s\ge2$, take a path $P_s$ on $s$ vertices, add a new vertex $o$, and join it to every vertex of the path. The resulting graph is called a fan graph, where $o$ is called the center vertex and $P_s$ the rim path.

\begin{theorem}\label{F:thm:fan}
The decision version of Minimum Label $k$-Cut is $\Wone$-hard parameterized by $k$, $\ell$, and $k+\ell$, respectively, even when the input is a simple fan graph and the edges of each color form an induced path in the underlying graph.
\end{theorem}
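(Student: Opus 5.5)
We reduce from \textsc{Clique} parameterized by $t$, in the spirit of Section~\ref{sec:hardness}, but with all vertex selection moved into rim colors. First we record how deletions act on a fan.

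\textbf{Component structure of a fan.} In a fan with center $o$ and rim $r_1,\dots,r_s$, let $L$ be a deleted color set. The components of $G-E(L)$ are the component containing $o$, together with the maximal rim intervals all of whose rim edges inside are kept but which are separated from the rest of the rim by deleted rim edges, and all of whose spokes are deleted. Hence $\Psi(L)$ equals the number of such fully cut rim intervals.

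\textbf{Admissible color classes.} A color class that is an induced path is one of three kinds:
\begin{itemize}
\item a rim subpath, which is induced because nonconsecutive rim vertices are nonadjacent;
\item a single spoke;
\item a path $r_i\,o\,r_j$ with $|i-j|\ge2$.
\end{itemize}
These are the only shapes we will use.

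\textbf{Construction.} Given $(H,t)$ with $M=\binom t2$, take $t\ge4$ as before.
\begin{itemize}
\item \emph{Vertex groups.} For each source vertex $u$, create a group: a rim interval of $d_H(u)$ vertices $r^u_e$, one for each edge $e\ni u$.
\item \emph{Buffers.} Separate consecutive groups by one buffer rim vertex $x$. Its spoke gets a private color.
\item \emph{Vertex colors.} Give all rim edges inside group $u$, together with the two rim edges joining the group to its neighbouring buffers, a single color $b_u$. This is a rim subpath, hence induced.
\item \emph{Edge colors.} For each source edge $e=uv$, give the two spokes $or^u_e$ and $or^v_e$ the color $a_e$. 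The endpoints lie in different groups separated by a buffer, so this path $r^u_e\,o\,r^v_e$ is induced.
\item \emph{Parameters.} Set $\ell=t+M$ and $k=1+2M$.
\end{itemize}
Both $k$ and $\ell$ are bounded by functions of $t$, so the reduction is an FPT reduction for $k$, $\ell$ and $k+\ell$ once equivalence is shown.

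\textbf{Completeness.} Given a $t$-clique, delete $b_u$ for its $t$ vertices and $a_e$ for its $M$ edges. Each of the $2M$ rim vertices $r^u_e$ with $e$ in the clique then has both rim edges and its spoke deleted, so it is isolated. The other rim vertices in these groups still have their spokes, so they remain attached to $o$. This gives $1+2M=k$ components using $t+M=\ell$ colors.

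\textbf{Soundness.} Let $L$ contain $p$ vertex colors, $a$ edge colors and $q$ buffer colors. A cut-off interval must consist of rim vertices whose spokes are all deleted. Its boundary rim edges must be deleted, and these carry vertex colors.
\begin{itemize}
\item An interval inside group $u$ requires $b_u\in L$. Since all rim edges of the group carry $b_u$, every cut-off piece is then a single vertex $r^u_e$ with $a_e\in L$.
\item A buffer vertex can be cut off only if both adjacent vertex colors and its private spoke color are chosen.
\end{itemize}
Therefore
\[
\Psi(L)\le 2a'+a''+q,
\]
where $a'$ counts selected edges with both endpoints' vertex colors selected, $a''$ counts selected edges with exactly one such endpoint, and $a'\le\binom p2$. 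Combining $\Psi(L)\ge 2M$ with $p+a+q\le t+M$ yields a deficit inequality analogous to the one in Lemma~\ref{H:lem:equivalence}. We then use the same estimate
\[
\binom t2-d-\binom{t-d}{2}=\frac{d(2t-d-3)}{2}>0
\]
to force $q=0$, $a''=0$, $p=t$ and $a'=M$. This last configuration means the selected edges are exactly all pairs among the $t$ selected vertices, i.e.\ a $t$-clique.

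\textbf{Main obstacle.} The delicate step is this soundness counting. One must check that cheap local cuts gain too little to compete: buffer cuts, half-selected edges, and a group where $b_u$ is chosen but few spokes are. I expect to need one extra adjustment here. Possible fixes are to lengthen buffers to two vertices joined by a private rim edge, or to replace the single buffer spoke by two spokes of distinct private colors, so that each auxiliary component costs at least two colors. Either makes the inequality strictly unfavorable. The fan structure itself, the inducedness of every color class, and the polynomial size are routine to verify.
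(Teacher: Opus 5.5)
Your reduction is correct. It uses a different gadget from the paper's to keep unselected vertex paths attached to $o$.

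\textbf{Comparison with the paper.} The paper colors the rim edges joining consecutive paths $P_v$ with private colors. It pads each $P_v$ with $q=\ell+1$ guard vertices whose spokes carry distinct private colors, so an unselected $P_v$ always retains a spoke. It then passes to the auxiliary graph $G_0$ without the joining edges, where each $P_v$ meets the rest only at $o$. There it obtains the exact count $\sum_{v\in S_L}d_{A_L}(v)+g_0$ and the two bounds $\Psi(L)\le 2a+g$ and $\Psi(L)\le a+\binom{b}{2}+g$.

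You instead let $b_u$ also cover the rim edges joining group $u$ to its buffers. Then every rim edge carries a vertex color, and the rim can be cut only by paying for vertex colors. Every cut-off piece other than a singleton $r^u_e$ with $b_u,a_e\in L$ contains a buffer (once there are at least two groups) whose private spoke is deleted, and disjoint pieces use distinct buffers. Your bullet on buffers should state exactly this, because pieces can also consist of buffers together with unselected groups, not only a lone buffer. The resulting single bound $\Psi(L)\le 2a'+a''+q$, with $a'\le\binom{p}{2}$, is at least as strong as both of the paper's bounds. Your instance also has size $O(|U|+|A|)$ rather than $O(|U|t^2+|A|)$. What the paper's padding buys is an exact component formula in which the coloring of the joining edges plays no role.

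\textbf{No adjustment is needed.} Contrary to your final paragraph, the bound you already have closes the argument. Write $a=a'+a''+a'''$, with $a'''$ counting the remaining selected edge colors. The budget gives $a'+a''+q\le t+M-p$. Hence $2M\le 2a'+a''+q\le 2(t+M-p)-a''-q$, which yields $2p+a''+q\le 2t$, and in particular $p\le t$. Also $2M\le a'+(a'+a''+q)\le\binom{p}{2}+t+M-p$, which yields $\frac{(t-p)(t+p-3)}{2}\le 0$. For $t\ge4$ these two facts force $p=t$. Then $a''=q=0$ and $M\le a'\le\binom{t}{2}=M$, so the $t$ selected vertices span $M$ selected source edges, i.e.\ a $t$-clique.

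Your second proposed fix is unavailable anyway, since in a simple fan each rim vertex has exactly one spoke.

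\textbf{Remaining details.} These are routine:
\begin{itemize}
\item Delete isolated vertices of $H$ first, since an empty group leaves $b_u$ undefined.
\item Output a fixed no-instance if $H$ has no edges, so that there are at least two groups and at least two rim vertices.
\item Note that the first and last groups have only one joining edge.
\end{itemize}
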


\subsection{The Reduction Construction on Fan Graphs}
We reduce from \textsc{Clique} parameterized by the clique size $t\ge4$, with a simple source graph $H=(U,A)$. If $|U|<t$, we output a triangle whose three edges have distinct colors and set $k=3$ and $\ell=0$; this is a fixed no-instance satisfying the input restrictions. In what follows, assume that $|U|\ge t$ and set
\begin{equation}
 M=\binom t2,\qquad \ell=M+t,\qquad k=2M+1,\qquad q=\ell+1.
 \label{F:eq:parameters}
\end{equation}
For each $v\in U$, order its incident source edges arbitrarily as $e_1,\ldots,e_{d_H(v)}$. Create vertices $x_{v,e_1},\ldots,x_{v,e_{d_H(v)}}$ and $q$ guard vertices $p_{v,1},\ldots,p_{v,q}$, and join them into a path $P_v$ in the following order:
\[
 p_{v,1},\;x_{v,e_1},\ldots,x_{v,e_{d_H(v)}},\;p_{v,2},\ldots,p_{v,q}.
\]
The paths $P_v$ are pairwise vertex-disjoint. Since $q\ge2$, the first and last vertices of each $P_v$ are guard vertices; even if $v$ is isolated in the source graph, $P_v$ still contains at least one edge. Order the source vertices arbitrarily and, between each pair of consecutive paths, add an edge joining the last vertex of the former to the first vertex of the latter, obtaining a path $P$. Then add a vertex $o$ and join it to every vertex of $P$; denote the resulting graph by $G$.

For each source edge $e=uv$, introduce a color $\alpha_e$ and color both $ox_{u,e}$ and $ox_{v,e}$ with $\alpha_e$. For each source vertex $v$, introduce a new color $\beta_v$ and color every edge of the path $P_v$ with $\beta_v$. Assign a distinct new private color to each remaining edge; these private colors are collectively called auxiliary colors.

\begin{lemma}[Properties of the construction]\label{F:lem:model}
The graph $G$ is a simple fan graph of pathwidth at most two, and the edges of each color form an induced path in $G$. The size of the construction is $O(|U|t^2+|A|)$, and both output parameters $k$ and $\ell$ depend only on $t$.
\end{lemma}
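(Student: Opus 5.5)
We verify the four claims of Lemma~\ref{F:lem:model} in the order in which they appear, directly from the construction.

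\emph{Fan graph.} First we check that $P$ is a single path. Each $P_v$ is a path on $d_H(v)+q\ge 2$ vertices. The paths $P_v$ are pairwise vertex-disjoint and are joined end-to-end by single edges, so $P$ is a path on $s=\sum_v(d_H(v)+q)=2|A|+|U|q\ge 2$ vertices. Adding $o$ adjacent to every vertex of $P$ gives exactly the fan graph with rim $P$. It is simple because $o$ is new and $P$ is a simple path. For pathwidth, we list the rim vertices $r_1,\dots,r_s$ in order and use the bags $\{o,r_i,r_{i+1}\}$ for $1\le i<s$. Every spoke $or_i$ and every rim edge $r_ir_{i+1}$ is covered, $o$ lies in all bags, and each $r_i$ lies in at most two consecutive bags. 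Hence the pathwidth is at most two.

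\emph{Color classes are induced paths.} This is the step needing the most care, and we check each color type in turn.
\begin{itemize}
\item An auxiliary color is a single edge, which is trivially an induced path.
\item The class of $\beta_v$ is the rim subpath $P_v$, which contains at least one edge. Its vertex set is a set of consecutive rim vertices not containing $o$. The only edges of $G$ among these vertices are therefore rim edges between consecutive vertices, so $G[V(P_v)]$ is exactly $P_v$.
\item The class of $\alpha_e$ for $e=uv$ consists of the two spokes $ox_{u,e}$ and $ox_{v,e}$, forming the path $x_{u,e}\,o\,x_{v,e}$. It is induced precisely when $x_{u,e}x_{v,e}$ is not a rim edge. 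The vertices $x_{u,e}$ and $x_{v,e}$ lie in the distinct blocks $P_u$ and $P_v$, so they could only be adjacent via a connecting edge between consecutive blocks. Such connecting edges join the last vertex of one block to the first vertex of the next, and both of those are guard vertices ($p_{u,q}$ and $p_{w,1}$). So the $x$-vertices are never endpoints of connecting edges, and the path is induced.
\end{itemize}
Finally, every edge receives exactly one color: rim edges inside $P_v$ get $\beta_v$, the spokes to $x$-vertices get $\alpha_e$, and the remaining edges (spokes to guard vertices and the connecting rim edges) get private colors.

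\emph{Size and parameters.} The number of vertices is $1+2|A|+|U|q$, where $q=\binom t2+t+1=O(t^2)$. A fan graph has fewer than twice as many edges as vertices, and there are at most as many colors as edges. Hence $N=O(|U|t^2+|A|)$, and the construction is clearly computable in polynomial time. Lastly, $k=2\binom t2+1$ and $\ell=\binom t2+t$ are functions of $t$ alone; this also covers the fixed no-instance output when $|U|<t$.
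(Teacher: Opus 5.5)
Your proof is correct and follows essentially the same route as the paper. It uses the same bag sequence $\{o,r_i,r_{i+1}\}$, the same guard-vertex argument that the $\alpha_e$ class $x_{u,e}\,o\,x_{v,e}$ is induced, the same contiguity argument for $\beta_v$, and the same vertex and edge count with $q=O(t^2)$; you also explicitly check that $P$ is a single path with $s\ge 2$ and that every edge receives exactly one color, which the paper leaves implicit.
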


\begin{proof}
By construction, $G$ is a simple fan graph. Denote the vertices of $P$ in order by $w_1,\ldots,w_s$. The bag sequence $\{o,w_1,w_2\},\ldots,\{o,w_{s-1},w_s\}$ gives a path decomposition of width two.

For each source edge $e=uv$, the edges of color $\alpha_e$ form the path $x_{u,e}-o-x_{v,e}$. Since the first and last vertices of both $P_u$ and $P_v$ are guard vertices, the subgraph of $G$ induced by the vertex set $\{x_{u,e},o,x_{v,e}\}$ is exactly the path $x_{u,e}-o-x_{v,e}$. The edges of color $\beta_v$ form the path $P_v$. Since $P_v$ is a contiguous subpath of $P$ and does not contain $o$, the subgraph of $G$ induced by its vertices is exactly $P_v$. Each auxiliary color occurs on only one edge, which also forms an induced path. Thus, all color classes satisfy the required condition.

The number of vertices of the path $P$ is $s=2|A|+q|U|$. The graph $G$ has $s+1$ vertices and $2s-1$ edges, and the number of colors does not exceed the number of edges. Since $q=O(t^2)$, the size of the construction is $O(|U|t^2+|A|)$. Together with $t\le |U|$, this shows that the entire construction can be carried out in polynomial time. Finally, \eqref{F:eq:parameters} shows that both $k$ and $\ell$ depend only on $t$.
\end{proof}

\subsection{Upper Bounds on the Component Increase}
Let $L$ be any color set satisfying $|L|\le\ell$. Define
\[
 A_L=\{e\in A:\alpha_e\in L\},\qquad
 S_L=\{v\in U:\beta_v\in L\},
\]
and write $a=|A_L|$ and $b=|S_L|$, with $g$ denoting the number of selected auxiliary colors. Thus, $a+b+g\le \ell=M+t$.

To facilitate counting components, delete from $G$ the edges joining distinct subpaths $P_v$, obtaining an auxiliary graph $G_0$. The paths $P_v$ are then connected to one another only through the vertex $o$, and $G_0$ has the same vertex set as $G$ and remains connected. For any color set $L$, additionally deleting these connecting edges from $G-E(L)$ cannot decrease the number of connected components, so
\begin{equation}\label{F:eq:comparison}
 \cc(G-E(L))\le \cc(G_0-E(L)).
\end{equation}

\begin{lemma}\label{F:lem:gain}
For any color set $L$ satisfying $|L|\le\ell$, the component increase after deleting these colors satisfies
\begin{equation}\label{F:eq:bounds}
 \Psi(L)\le 2a+g,\qquad \Psi(L)\le a+\binom b2+g.
\end{equation}
\end{lemma}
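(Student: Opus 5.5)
The plan is to count the components of $G-E(L)$ that avoid the center vertex $o$. The comparison \eqref{F:eq:comparison} with $G_0$ is not needed for this. Since $G$ is a fan, it is connected, so $c_0=1$, and $\Psi(L)$ equals the number of components of $G-E(L)$ not containing $o$. Every vertex $w\neq o$ lies on the rim path $P$ and is joined to $o$ by its spoke $ow$. Hence any component avoiding $o$ consists only of vertices whose spokes lie in $E(L)$.

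By construction the spokes carry only two kinds of colors. The spokes $ox_{v,e}$ carry the source-edge colors $\alpha_e$, and every other spoke, in particular every guard spoke $op_{v,i}$, carries an auxiliary color. The colors $\beta_v$ appear only on rim edges.

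\textbf{First bound.} Each selected $\alpha_e$ deletes exactly two spokes, and each selected auxiliary color deletes at most one edge, hence at most one spoke. So at most $2a+g$ vertices lose their spoke. Every component avoiding $o$ contains at least one such vertex, and distinct components are disjoint, so $\Psi(L)\le 2a+g$.

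\textbf{Second bound.} Let $K$ be a component avoiding $o$. If $K$ contains a vertex whose spoke has an auxiliary color, charge $K$ to that spoke. Distinct components receive distinct spokes, so at most $g$ components are charged this way. Otherwise every vertex of $K$ is some $x_{v,e}$ with $\alpha_e\in L$, and I claim that $K$ is then the single vertex $x_{v,e}$ with $\beta_v\in L$.

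To see this, suppose $\beta_v\notin L$. Then all edges of $P_v$ survive, so $x_{v,e}$ lies in the same component as the guard $p_{v,1}$. The spoke of $p_{v,1}$ has an auxiliary color. Either that spoke survives, putting $p_{v,1}$, and hence $x_{v,e}$, in the component of $o$; or it is deleted, in which case $K$ contains $p_{v,1}$ and was already charged. So $\beta_v\in L$, and all edges of $P_v$ are deleted. The rim neighbours of $x_{v,e}$ lie in $P_v$, since the endpoints of each $P_v$ are guards, so all its rim edges are edges of $P_v$. Hence $x_{v,e}$ is isolated and $K=\{x_{v,e}\}$.

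\textbf{Counting the uncharged components.} Each remaining component corresponds to a pair $(v,e)$ with $e\in A_L$, $v\in e$ and $v\in S_L$. Each $e\in A_L$ yields at most two such pairs, and it yields two only if both endpoints lie in $S_L$. Since $H$ is simple, at most $\binom b2$ edges have both endpoints in $S_L$. The number of such pairs is therefore at most $a+\binom b2$, which gives $\Psi(L)\le a+\binom b2+g$.

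The main obstacle is the case analysis in the second bound. It must cover guard spokes that are themselves deleted, which I handle by charging the component to $g$. It must also cover the inter-path connecting edges, which are harmless because they join guards rather than $x$-vertices. Both points rest on the fact that each $P_v$ begins and ends with a guard.
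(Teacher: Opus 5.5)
Your proof is correct, but it takes a different route from the paper's. The paper first deletes the edges joining consecutive subpaths to obtain $G_0$. It then uses $|L|\le\ell$ and $q=\ell+1$ to guarantee that every $P_v$ with $v\notin S_L$ keeps a guard spoke and so stays attached to $o$, and derives the exact identity $\cc(G_0-E(L))-1=\sum_{v\in S_L}d_{A_L}(v)+g_0$. Both bounds then follow from $\sum_{v\in S_L}d_{A_L}(v)\le\min\{2a,\,a+\binom b2\}$, $g_0\le g$, and $\cc(G-E(L))\le\cc(G_0-E(L))$.

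You work directly in $G$ instead. Your first bound is an injection of the components avoiding $o$ into the deleted spokes. For the second, you charge every component containing a deleted guard spoke to $g$. This absorbs exactly the situation the paper excludes by the pigeonhole over the $q$ guards. The connecting edges are harmless because no $x$-vertex is incident with one.

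Your version therefore needs neither $G_0$ nor the hypothesis $|L|\le\ell$; both inequalities hold for every color set. Since the uncharged components inject into the pairs $(v,e)$ with $v\in S_L$, $e\in A_L$, $v\in e$, it also yields the refined bound $\Psi(L)\le\sum_{v\in S_L}d_{A_L}(v)+g$. That is the inequality the proof of Theorem~\ref{F:thm:fan} uses in its equality step. The paper's approach, in exchange for invoking the budget, gives an exact component count in $G_0$, which makes that equality analysis slightly more transparent.
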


\begin{proof}
Let $d_{A_L}(v)$ denote the number of edges in $A_L$ incident with $v$, and let $g_0$ be the number of edges $op_{v,i}$ such that $v\in S_L$ and $\col(op_{v,i})\in L$.

If $v\notin S_L$, then $\beta_v\notin L$, so every edge of the subpath $P_v$ is retained. The edges $op_{v,1},\ldots,op_{v,q}$ have pairwise distinct private colors, and $q=\ell+1$. Hence, when at most $\ell$ colors are selected, at least one of these edges is retained. Therefore, all vertices of $P_v$ remain connected to $o$.

If $v\in S_L$, then every edge of $P_v$ is deleted. Since the edges joining distinct subpaths have already been removed from $G_0$, each vertex of $P_v$ can remain in the component containing the center only through its edge to $o$; if this edge is deleted, the vertex becomes isolated. For a vertex $x_{v,e}$, the edge $ox_{v,e}$ is deleted if and only if $e\in A_L$, so there are $d_{A_L}(v)$ isolated vertices of this type. Summing over all $v\in S_L$ and including the $g_0$ isolated vertices created by deleting edges $op_{v,i}$ accounts for all new components. All remaining vertices lie in a single connected component containing $o$, and hence
\begin{equation}\label{F:eq:gain}
 \cc(G_0-E(L))-1=\sum_{v\in S_L}d_{A_L}(v)+g_0.
\end{equation}

Each edge in $A_L$ has at most two endpoints in $S_L$, so $\sum_{v\in S_L}d_{A_L}(v)\le2a$. On the other hand, counting each edge with at least one endpoint in $S_L$ once, and then counting each edge with both endpoints in $S_L$ one additional time, gives
\[
 \sum_{v\in S_L}d_{A_L}(v)=|\{e\in A_L:e\cap S_L\ne\varnothing\}|+|A_L\cap E(H[S_L])|\le a+\binom b2.
\]
The last inequality uses the simplicity of $H$ and $|S_L|=b$. Since the edges counted by $g_0$ have pairwise distinct auxiliary colors, we have $g_0\le g$. Combining \eqref{F:eq:comparison} and \eqref{F:eq:gain} yields \eqref{F:eq:bounds}.
\end{proof}

\subsection{Correctness of the Reduction}
\begin{proof}[Proof of Theorem~\ref{F:thm:fan}]
If $S\subseteq U$ induces a $t$-clique, let
\[
 L=\{\beta_v:v\in S\}\cup\{\alpha_e:e\in E(H[S])\}.
\]
Then $|L|=t+M=\ell$. For each $e=uv\in E(H[S])$, both edges $ox_{u,e}$ and $ox_{v,e}$ are deleted, as are all edges of the subpaths $P_u$ and $P_v$. Since $x_{u,e}$ and $x_{v,e}$ are internal vertices of their respective subpaths, the edges joining distinct subpaths are not incident with them, so both vertices become isolated. This yields $2M$ isolated vertices. All other edges incident with $o$ are retained, so the remaining vertices belong to the same connected component containing $o$. Hence, $G-E(L)$ has exactly $2M+1=k$ connected components.

Conversely, suppose that $|L|\le\ell$ and $\Psi(L)\ge2M$. The first inequality in \eqref{F:eq:bounds} and $a+b+g\le M+t$ give
\begin{equation}\label{F:eq:budget}
 2M\le2a+g\le2(M+t-b-g)+g=2M+2(t-b)-g.
\end{equation}
Since $g\ge0$, we obtain $b\le t$. Suppose that $b<t$. Then the second inequality in \eqref{F:eq:bounds} gives
\[
 2M\le\Psi(L)\le a+\binom b2+g\le M+t-b+\binom b2.
\]
However, since $M=\binom t2$,
\begin{equation}\label{F:eq:strict}
 2M-\left(M+t-b+\binom b2\right)=\frac{(t-b)(t+b-3)}2>0,\qquad t\ge4,\quad 0\le b<t,
\end{equation}
a contradiction. Therefore, $b=t$.

Substituting $b=t$ into \eqref{F:eq:budget} yields $g=0$. The budget constraint now gives $a\le M$, while $\Psi(L)\ge2M$ and $\Psi(L)\le2a$ give $a\ge M$, so $a=M$. Combining \eqref{F:eq:comparison} and \eqref{F:eq:gain}, we have
\[
 2M\le\Psi(L)\le\sum_{v\in S_L}d_{A_L}(v)\le2|A_L|=2M.
\]
Thus, all the inequalities above hold with equality. Each edge in $A_L$ contributes at most two to the sum, so equality requires both of its endpoints to belong to $S_L$. Hence, there are $M=\binom t2$ distinct source edges among the $t$ vertices of $S_L$. Since $H$ is simple, $S_L$ induces a $t$-clique.

Together with Lemma~\ref{F:lem:model}, the construction gives FPT reductions for the parameters $k$, $\ell$, and $k+\ell$, respectively, proving the theorem.
\end{proof}

\section{Conclusion}
We have proved that Minimum Label $k$-Cut on simple planar graphs is $\Wone$-hard parameterized by $k$, thereby resolving the fixed-parameter tractability question posed by Fomin et al.\ under the assumption that $\mathrm{FPT}\ne\Wone$. This hardness persists on cactus graphs of maximum degree at most three and pathwidth at most two, with each color occurring at most twice. Thus, sparsity of the underlying graph and low color frequency are both insufficient to guarantee fixed-parameter tractability with respect to the target number of components.

For cactus graphs with connected color classes, we have established a boundary for polynomial-time solvability between maximum monochromatic degrees two and three. When the maximum monochromatic degree is at most two, an optimal solution can be found in $O(Nk^2)$ time, and the algorithm allows color classes to contain arbitrarily many edges and span arbitrarily many blocks. When the maximum monochromatic degree is allowed to reach three, the decision problem is already NP-complete even if every color class is a single edge or a three-edge star. Together, these two results show that, within this graph class, allowing connected color classes to branch changes the polynomial-time solvability of the problem.

The reduction on fan graphs further preserves color connectivity and the bound of two on the maximum monochromatic degree, yet still yields $\Wone$-hardness parameterized by $k$. Thus, the block structure of cactus graphs is an important condition for the positive result above; once the class of underlying graphs is enlarged to outerplanar graphs, requiring color classes to be paths is no longer sufficient to obtain a corresponding algorithm. Both parameterized hardness results in this paper also apply to the color budget $\ell$ and the combined parameter $k+\ell$, respectively.

\end{document}